\documentclass[10pt,journal,compsoc]{IEEEtran}

\usepackage{subfigure}
\usepackage{amsthm}
\usepackage{amsmath,amssymb,amsfonts}
\usepackage{algorithm}
\usepackage{algorithmic}
\usepackage{graphicx}
\usepackage{textcomp}
\usepackage{xcolor}
\usepackage{enumerate}

\newtheorem{definition}{Definition}
\newtheorem{corollary}{Corollary}
\newtheorem{proposition}{Proposition}
\newtheorem{lemma}{Lemma}

\hyphenation{op-tical net-works semi-conduc-tor}

\begin{document}
\title{Local Generalization and Bucketization Technique for Personalized Privacy Preservation}

\author{Boyu~Li,
	Kun~He*,~\IEEEmembership{Senior~Member,~IEEE}
	and~Geng~Sun
	\IEEEcompsocitemizethanks{\IEEEcompsocthanksitem B. Li and K. He are with the School of Computer Science and Technology, Huazhong University of Science and Technology, Wuhan, 430074, China.\protect\\
	E-mail: afterslby@163.com, brooklet60@hust.edu.cn
	\IEEEcompsocthanksitem G. Sun is with the College of Computer Science and Technology, Jilin University, Changchun, 130012, China.\protect\\
	E-mail: sungeng207@foxmail.com}
  \thanks{*: Corresponding author}}

\markboth{}
{Shell \MakeLowercase{\textit{et al.}}: Bare Demo of IEEEtran.cls for Computer Society Journals}

\IEEEtitleabstractindextext{%
\begin{abstract}
  Anonymization technique has been extensively studied and widely applied for privacy-preserving data publishing. In most previous approaches, a microdata table consists of three categories of attribute: explicit-identifier, quasi-identifier (QI), and sensitive attribute. Actually, different individuals may have different view on the sensitivity of different attributes. Therefore, there is another type of attribute that contains both QI values and sensitive values, namely, semi-sensitive attribute. Based on such observation, we propose a new anonymization technique, called local generalization and bucketization, to prevent identity disclosure and protect the sensitive values on each semi-sensitive attribute and sensitive attribute. The rationale is to use local generalization and local bucketization to divide the tuples into local equivalence groups and partition the sensitive values into local buckets, respectively. The protections of local generalization and local bucketization are independent, so that they can be implemented by appropriate algorithms without weakening other protection, respectively. Besides, the protection of local bucketization for each semi-sensitive attribute and sensitive attribute is also independent. Consequently, local bucketization can comply with various principles in different attributes according to the actual requirements of anonymization. The conducted extensive experiments illustrate the effectiveness of the proposed approach.
\end{abstract}

\begin{IEEEkeywords}
  Privacy preservation, data publication, local generalization, local bucketization.
\end{IEEEkeywords}}

\maketitle

\IEEEdisplaynontitleabstractindextext

\IEEEpeerreviewmaketitle

\IEEEraisesectionheading{\section{Introduction}\label{sec:introduction}}
\IEEEPARstart{W}{ith} human society enters the age of big data, the variety of individual information, such as income investigation, medical information, and demographic census, has been collected by corporations and governments. These massive personal data are used in data mining and machine learning that contributes to corporations to create business values and governments to develop policies, and also provides people with a more intelligent and convenient way of life. Under the promotion of big data technology, the information barriers between various industries and government departments are gradually broken, and the exchanges and sharings of microdata have become increasingly important activities. However, the published data always contain privacy information. These data may disclose the secrets of individuals if the microdata are published without any disguise.

Many anonymization techniques, such as generalization \cite{Gen} and bucketization \cite{Ana}, are proposed for privacy-preserving data publishing. In these approaches, the attributes in microdata table are classified into three categories: (1) Explicit-Identifier, which can uniquely or mostly identify the record owner and must be removed from the published table; (2) Quasi-Identifier (QI), which can be used to re-identify the record owner when taken together; and (3) Sensitive attribute, which contains the confidential information of individuals.

Generalization transforms the values on QI attributes into general forms, and the tuples whose generalized values are the same constitute an equivalence group. As a result, the records in the same equivalence group are indistinguishable. While bucketization divides the tuples into buckets that breaks the relation between QI attributes and sensitive attributes. Therefore, every record corresponds to the diverse sensitive values within the bucket.

\subsection{Motivation}
\label{sec_moti}
Previous approaches always suppose that an attribute includes only QI values or sensitive values. In fact, different individuals may view different data values as sensitive on the same attribute. Thus, an attribute may contain both QI values and sensitive values, which is considered as semi-sensitive.

For example, a hospital releases some diagnosis records of patients, as shown in Figure \ref{fig_microdata}, to allow researchers to study the characteristics of various diseases. In the microdata table, each attribute except ID has a flag that marks whether a tuple treats her/his value as sensitive (e.g., the tuple with ID 1002 does not care her age value is known by others, but the one with ID 1007 wants to keep secret). In consequence, the attributes of age and zip code are semi-sensitive in the microdata table because they contain both QI values and sensitive values.

\begin{figure*}[!t]
	\centering
	\subfigure[]{\includegraphics[width=3.0in]{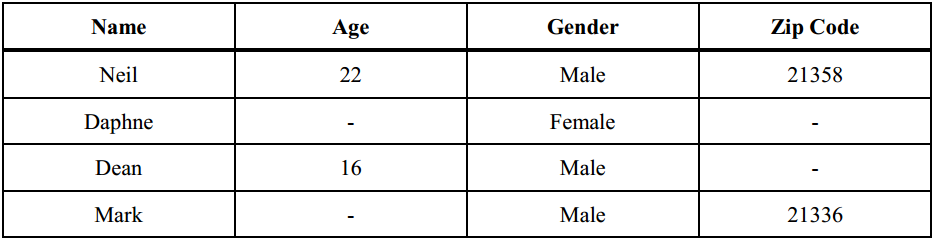}
		\label{fig_background}}
	\hfil
	\subfigure[]{\includegraphics[width=3.0in]{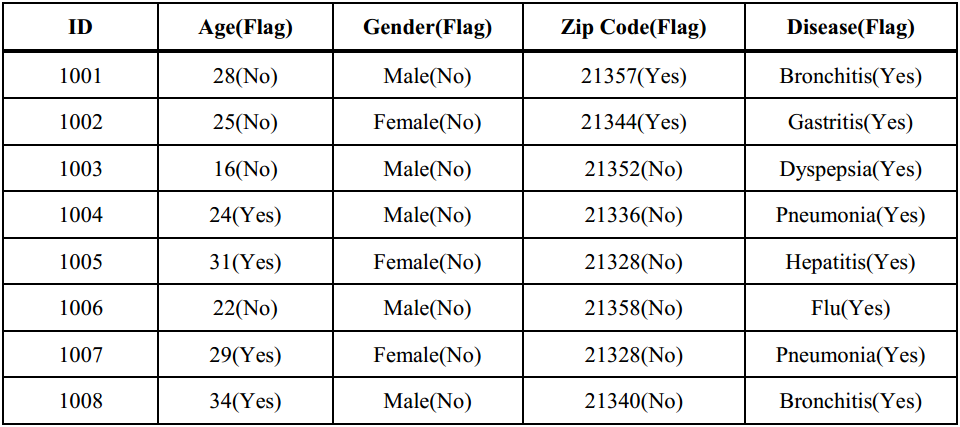}
		\label{fig_microdata}}
	\hfil
	\subfigure[]{\includegraphics[width=3.0in]{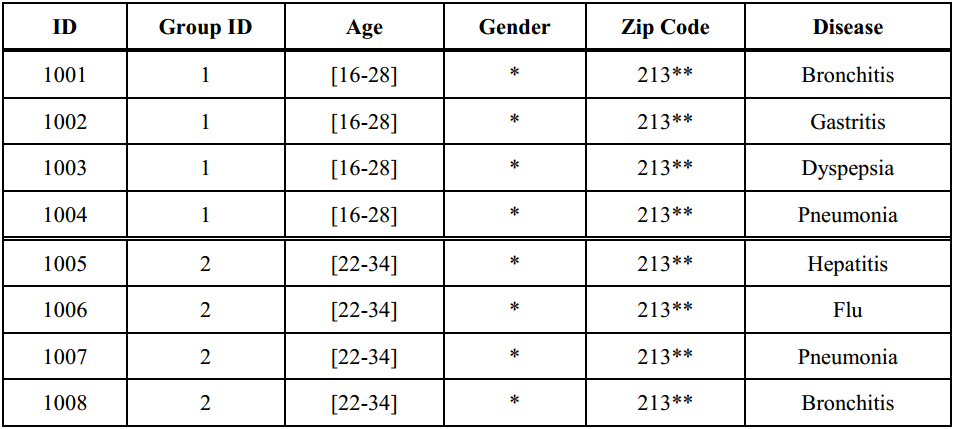}
		\label{fig_gen}}
	\hfil
	\subfigure[]{\includegraphics[width=3.0in]{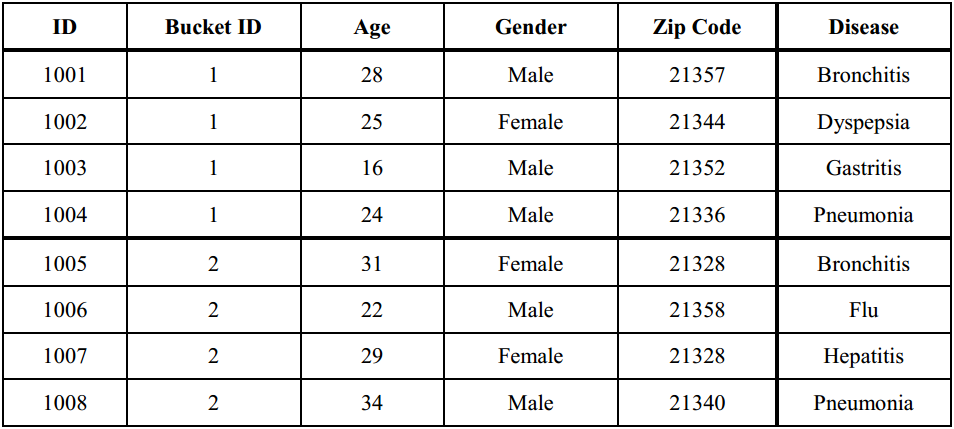}
		\label{fig_ana}}
	\caption{(a) The background knowledge of adversary, (b) the microdata table, (c) the generalized table, (d) the bucketized table}
	\label{fig_1}
\end{figure*}

Suppose that an adversary has the background knowledge as shown in Figure \ref{fig_background} and obtains the microdata table of Figure \ref{fig_microdata}. Knowing that Mark went to the hospital before and matching by his values of gender and zip code, the adversary infers that: (1) his record is with ID 1004 in the microdata table; and (2) his age is 24 and disease is pneumonia. The goal of preventing such privacy disclosures has resulted in the development of many anonymization techniques (see survey \cite{Sur}). Note that, previous generalization and bucketization anonymize whole attributes rather than specific values, then they can only regard semi-sensitive attributes as QI attributes. The generalized and bucketized versions of Figure \ref{fig_microdata} are given in Figure \ref{fig_gen} and \ref{fig_ana}, respectively.

Although generalization effectively prevents identity disclosure, it always suffers from serious information loss as proposed by \cite{Curse, Trade, Injecting}. Almost all the values are irreversibly generalized that hinders recipients from analyzing data information. For example, Figure \ref{fig_gen} stops the adversary from recognizing record owner but in which poor information utility is preserved for recipients.

While the bucketized table preserves excellent information utility, but it only protects the confidential values on the sensitive attribute without caring personalized privacy requirements. The sensitive values on the semi-sensitive attributes tend to be revealed when the adversary has enough background knowledge. For example, the adversary can still acquire the ID and age of Mark by matching QI values in Figure \ref{fig_ana}.

In this paper, we propose local generalization and bucketization (LGB) technique to address the problem of protecting personalized privacy information. LGB prevents both disclosures of identities and sensitive values, and also preserves significant information utility. It uses local generalization and local bucketization to partition the tuples into local equivalence groups in which just specific QI values are generalized and divide the sensitive values into local buckets within each semi-sensitive attribute and sensitive attribute, respectively. The detailed formalization and analysis of LGB are presented in Section \ref{sec_def}. Figure \ref{fig_lgb} shows a possible anonymized result of Figure \ref{fig_microdata} by LGB.

\begin{figure}[!t]
	\centering
	\includegraphics[width=3.0in]{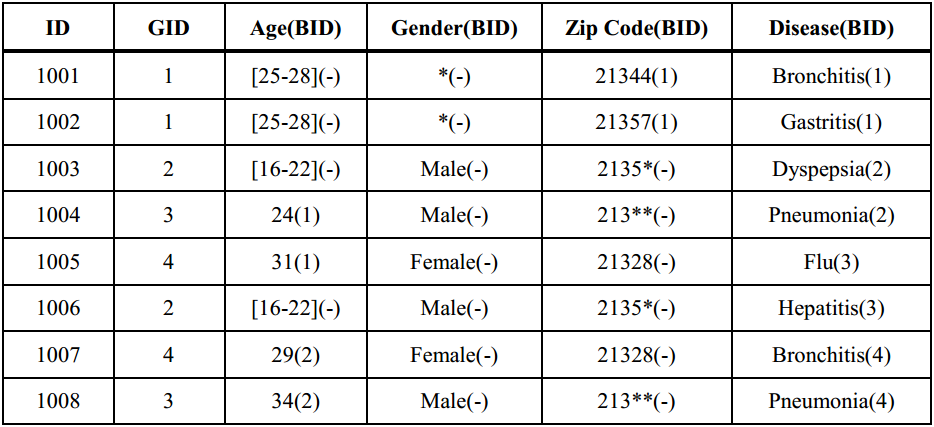}
	\caption{The anonymized table by LGB}
	\label{fig_lgb}
\end{figure}

In Figure \ref{fig_lgb}, the attribute of GID denotes the ID of local equivalence group, and the flag of BID represents the ID of local bucket inside the corresponding attribute. Note that, every local bucket contains only sensitive values, and all the QI values are generalized by local generalization. For example, when the adversary matches the QI values of Mark in Figure \ref{fig_lgb}, he can only infer that Mark's record ID may be 1004 and 1008 which belongs to the local equivalence group of GID 3, and the attribute of age includes the local buckets of BID 1 and 2 in the local equivalence group. Then the adversary concludes that Mark's age may be 24, 31, 29 and 34. For the same reason, the disease value of Mark may be dyspepsia, pneumonia and bronchitis. As a result, the adversary can not determine the exact record ID and sensitive values of target tuple.

\subsection{Contributions}
\label{sec_contri}
This study extends the concept of personalized anonymity \cite{person}. It assumes that individuals can determine their sensitive values at will, an attribute can be QI, semi-sensitive, or sensitive, and a microdata table consists of several QI attributes, semi-sensitive attributes and sensitive attributes. We suppose the background knowledge of adversary is that: (1) the adversary does not acquire any sensitive value on semi-sensitive attribute and sensitive attribute because people must cautiously keep their confidential information from strangers; and (2) in the worst case, the adversary knows the existences and QI values of all the individuals in microdata table. The adversary aims to obtain the ID and sensitive values of target person from the anonymized table. Our contributions are as follows.

First, we propose LGB technique to protect personalized privacy information. LGB combines with local generalization and local bucketization to provide secure protections for identities and sensitive values, and it also reduces information loss as far as possible. The protections of local generalization and local bucketization are independent, so that they can be achieved by appropriate algorithms without weakening the other protection, separately. Additionally, the protection of local bucketization in each semi-sensitive attribute and sensitive attribute is also independent. Therefore, local bucketization can comply with different principles in the different attributes according to the practical demands of anonymization.

Second, we illustrate the effective protections of LGB for identities and sensitive values based on satisfying the principles of $k$-anonymity and $l$-diversity, respectively, i.e., for each tuple, the probabilities of the disclosures of identity and sensitive values are at most 1/$k$ and 1/$l$, respectively. Moreover, since the protections are independent, either degree of protections can be flexibly adjusted according to the actual requirements without reducing the other level of defense, severally.

Third, an efficient algorithm is presented to achieve LGB complying with $k$-anonymity and $l$-diversity. The algorithm contains two main parts, which are local generalization and local bucketization, to partition the tuples into local equivalence groups and divide the sensitive values into local buckets, respectively. We also propose two different algorithms to implement local generalization based on multi-dimensional partition and minimizing normalized certainty penalty (NCP) for different utilization purposes, separately. Furthermore, the range of each local bucket is minimized as far as possible to preserve more information utility.

Last but not least, we conduct a large number of experiments to illustrate the basic property of LGB and the different performances between the two proposed local generalization algorithms through the results of discernibility metric measurement, NCP, and aggregate query answering. The effect of the density of the sensitive values in semi-sensitive attributes is also studied.

The rest of this paper is organized as follows: Section \ref{sec_def} proposes the formalization and analysis of LGB. Section \ref{sec_alg} presents an algorithms to achieve LGB. Section \ref{sec_exp} shows the results and analysis of experiments. Section \ref{sec_rel} describes the related studies. Section \ref{sec_conclu} concludes the paper and proposes the directions for future studies.

\section{LGB}
\label{sec_def}

\begin{figure*}[!t]
	\centering
	\subfigure[]{\includegraphics[width=3.0in]{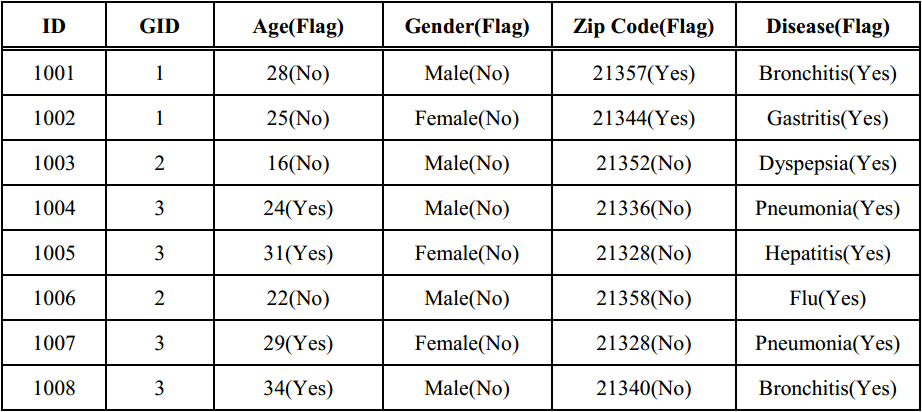}
		\label{fig_qg1}}
	\hfil
	\subfigure[]{\includegraphics[width=3.0in]{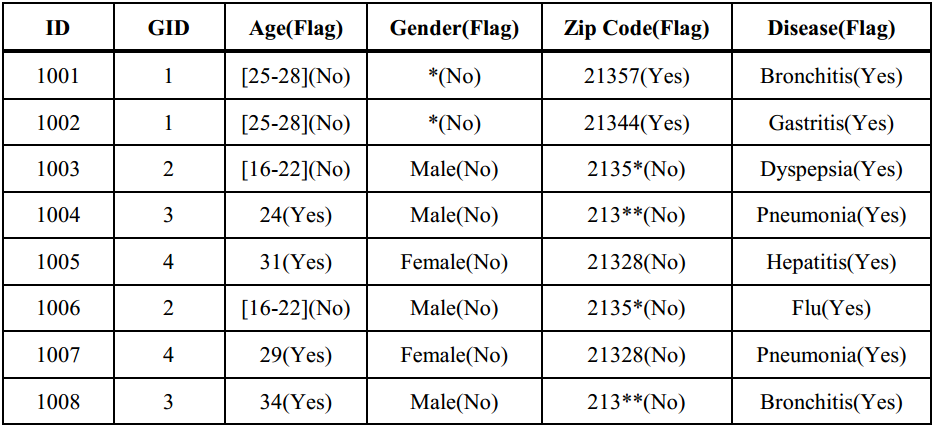}
		\label{fig_qg2}}
	\caption{(a) The first partition, (b) the second partition}
	\label{fig_3}
\end{figure*}

\subsection{Concepts}
\label{sec_concept}
The formalization of LGB technique requires certain prior and novel concepts. We first re-define the categories of attribute based on the property of data value as follows.

\begin{definition}[QI Attribute]
	\label{def_QI}
	An attribute is considered as a QI attribute, denoted as $A^{QI}$, iff the attribute contains only QI values.
\end{definition}

\begin{definition}[Sensitive Attribute]
	\label{def_SA}
	An attribute is considered as a sensitive attribute, denoted as $A^{SA}$, iff the attribute contains only sensitive values.
\end{definition}

\begin{definition}[Semi-Sensitive Attribute]
	\label{def_SS}
	An attribute is considered as a semi-sensitive attribute, denoted as $A^{SS}$, iff the attribute contains both QI values and sensitive values.
\end{definition}

The new definitions of attribute allow individuals to customize their own privacy requirement, and releaser can employ appropriate anonymization approaches to protect people's personalized privacy information and preserve worthy data utilizability according to the characters of different attributes.

\begin{definition}[Partition and QI Group]
	\label{def_par}
	A partition consists of several subsets of $T$, such that each tuple belongs to exactly one subset, and each subset is called a QI group. Specifically, let there be $m$ QI groups $\{G_1, G_2, \cdots , G_m\}$, then $\bigcup^{m}_{i=1}G_i=T$, and for any $1 \leq i_1 \neq i_2 \leq m$, $G_{i_1} \cap G_{i_2}=\emptyset$.
\end{definition}

QI group has different performances when using different anonymization approaches. In generalization, the tuples in the same QI group have the same QI values. While in bucketization, each QI group is divided into two sub-tables, each containing QI values and sensitive values, respectively.

\begin{definition}[Equivalence Group]
	\label{def_equi}
	Given a partition of $T$ with $m$ QI groups, each QI group is called an equivalence group, if for any tuple $t \in T$, a generalized table of $T$ contains the tuple $t$ of the form:
	\begin{center}
		$(G_j[1], G_j[2], \cdots , G_j[d], t[A^{SA}])$,
	\end{center}
	where $G_j(1 \leq j \leq m)$ is the unique QI group including $t$, $G_j[i](1 \leq i \leq d)$ is the generalized value on $A^{QI}_i$ for all the tuples in $G_j$, and $t[A^{SA}]$ represents $t$'s value on $A^{SA}$.
\end{definition}

\begin{definition}[Bucket]
	\label{def_buc}
	Given a partition of $T$ with $m$ QI groups, each QI group is called a bucket, if each QI group is represented as the form:
	\begin{center}
		$QIT(QI, BID)$ and $SAT(SA, BID)$,
	\end{center}
	where $QI$ and $SA$ are the QI values and sensitive values of the tuples in the QI group, respectively, and $BID$ denotes the ID of bucket.
\end{definition}

In previous equivalence groups, all the whole attributes including QI values are generalized to the same form that always causes overprotection. We propose local generalization technique based on the new definitions of attribute to partition the tuples into local equivalence groups through generalizing just specific QI values.

\begin{definition}[QI Partition]
	\label{def_QIp}
	For any tuple $t \in T$, $QI[t]$ represents the set of the attributes containing $t$'s QI values, such that
	\begin{center}
		$QI[t]=\{A | t[A]\ is\ a\ QI\ value\}$.
	\end{center}
	A QI partition of $T$ divides the table into disjoint subsets $\{T_{1}, T_{2}, \cdots, T_{m}\}$, such that for any $1 \leq i \neq j \leq m$, $T_{i} \cap T_{j}=\emptyset$, $\bigcup^{m}_{i=1}T_i=T$, and for any $t_{i_{1}}, t_{i_{2}} \in T_{i}$, $QI[t_{i_{1}}] = QI[t_{i_{2}}]$.
\end{definition}

\begin{definition}[Local Equivalence Group]
	\label{def_le}
	Given a microdata table $T$ and a QI partition of $T$ with $m$ subsets, each subset is called a local equivalence group, if the QI values are generalized to the same form in the corresponding attribute, such that for any tuple $t \in T$, a locally generalized table of $T$ contains the tuple $t$ of the form:
	\begin{center}
		$(LEG_j[A^{QI}_1], \cdots , LEG_j[A^{QI}_p], t[A^{SA}_1], \cdots , t[A^{SA}_q])$,
	\end{center}
	where $LEG_j(1 \leq j \leq m)$ is the unique local equivalence group including $t$, $A^{QI}_{i_{1}}(1 \leq i_{1} \leq p)$ and $A^{SA}_{i_{2}}(1 \leq i_{2} \leq q)$ denote the attributes containing QI value and sensitive value of $t$ in $LEG_j$, respectively, $LEG_j[A^{QI}]$ is the generalized value on $A^{QI}$ for all the tuples in $LEG_j$, and $t[A^{SA}]$ represents $t$'s value on attribute $A^{SA}$.
\end{definition}

Local generalization contains two partition steps. First, it divides the tuples into subsets by QI partition in which all the records carry the QI values on the same attributes. For example, in Figure \ref{fig_qg1}, the tuples with ID 1001 and 1002 are in the same subset because both of them just carry QI values on the attributes of age and gender. Next, local generalization partitions the tuples into local equivalence groups within each subset, and generalizes their QI values. For example, the tuples in the subset of GID 3 in Figure \ref{fig_qg1} are divided into the local equivalence groups of GID 3 and 4 in Figure \ref{fig_qg2}, and every group in Figure \ref{fig_qg2} is a local equivalence group.

Likewise, previous bucketization protects for the whole sensitives attribute rather than specific sensitive values. We present local bucketization technique to partition the sensitive values into local buckets in the corresponding attribute.

\begin{definition}[Local Bucket]
	\label{def_lb}
	For any semi-sensitive attribute or sensitive attribute in $T$, the sensitive values are partitioned into local buckets, and each local bucket has the form:
	\begin{center}
		$IDT(ID, BID)$ and $SAT(SA, BID)$,
	\end{center}
	where $ID$ and $SA$ represent the IDs and sensitive values of the tuples in the local bucket, respectively, and $BID$ denotes the ID of local bucket within the attribute.
\end{definition}

\begin{figure}[!t]
	\centering
	\includegraphics[width=3.0in]{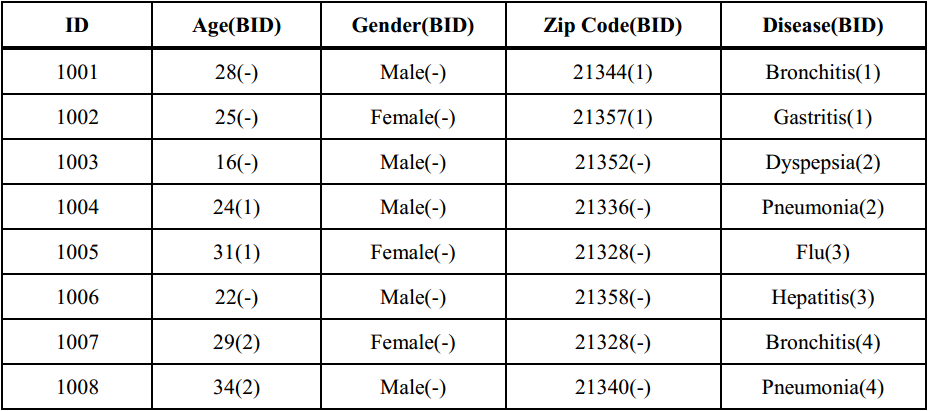}
	\caption{The locally bucketized table}
	\label{fig_lb}
\end{figure}

For example, in Figure \ref{fig_lb}, the tuples with ID 1004 and 1005 are in the same local bucket of BID 1 within the attribute of age, but they are in the different local buckets of BID 2 and 3 within the attribute of disease. Therefore, the local buckets within the different attributes are independent.

Note that, the previous equivalence groups and buckets can be treated as the special cases of local equivalence groups and local buckets when the microdata table does not contain any semi-sensitive attribute, respectively. Based on Definition \ref{def_le} and \ref{def_lb}, we define LGB technique as follows.

\begin{definition}[Local Generalization and Bucketization]
	\label{def_lgb}
	Given a microdata table $T$, a local generalization and bucketization of $T$ is given by the partitions of local generalization and local bucketization, and every tuple and sensitive value belongs to exactly one local equivalence group and local bucket, respectively.
\end{definition}

\subsection{Protection Analysis}
\label{sec_pa}
In this section, we analyze the protections of local generalization and local bucketization against the disclosures of identities and sensitive values in detail. Without loss of generality, we illustrate how local generalization and local bucketization comply with $k$-anonymity \cite{Kanony} and $l$-diversity \cite{Ldiver}, respectively. Then, we prove the locally generalized and bucketized table can also satisfy $k$-anonymity and $l$-diversity by meeting the corresponding conditions. We first consider the protection against identity disclosure in the locally generalized table and have the following lemma and corollary.

\begin{lemma}
	\label{lem_lg}
	Given a locally generalized table, for any tuple $t \in T$, the probability of identity disclosure is at most $1/|LEG(t)|$, where $LEG(t)$ is the local equivalence group including $t$.
\end{lemma}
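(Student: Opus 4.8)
The plan is to argue directly from the adversary model stated in the contributions: the adversary knows the existence and QI values of every individual but none of the sensitive values, and wants to pin down which published tuple belongs to the target. So I would first fix a target tuple $t$ (noting that $t \in LEG(t)$ by construction) and record precisely what the adversary sees. By Definition~\ref{def_le}, the locally generalized version of $t$ is presented as the generalized QI values $LEG_j[A^{QI}_1],\dots,LEG_j[A^{QI}_p]$ together with $t$'s sensitive values; since the sensitive values are unknown to the adversary by assumption, the only usable handle for re-identification is the tuple of generalized QI values.

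Next I would characterize the candidate set, i.e. the tuples the adversary cannot rule out as being $t$. By the QI partition (Definition~\ref{def_QIp}), every tuple inside the subset containing $t$ carries its QI values on exactly the same attributes $QI[t]$, and by Definition~\ref{def_le} the local equivalence group $LEG(t)$ generalizes those QI values to a common form. Hence every tuple of $LEG(t)$ exhibits an identical QI attribute set and identical generalized QI values, so each is equally consistent with the target's known QI values, and the adversary cannot distinguish $t$ from any other member of $LEG(t)$. This shows the candidate set contains all of $LEG(t)$, so its cardinality is at least $|LEG(t)|$ (a member of a different group whose generalized range happens to cover $t$'s QI values can only enlarge it).

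Finally I would convert indistinguishability into the probability bound. Because the members of $LEG(t)$ are mutually indistinguishable under the adversary's knowledge, the best re-identification strategy reduces to a uniform guess over the candidate set, so the probability of selecting the true record of $t$ is at most $1/(\text{candidate-set size}) \le 1/|LEG(t)|$. The main obstacle is precisely the indistinguishability step: I must justify that no side information available to the adversary --- in particular the published sensitive values --- breaks the symmetry among the tuples of $LEG(t)$, which is exactly where the assumption that the adversary holds no sensitive value, together with the fact that local generalization erases the per-tuple QI differences, does the work. The remaining arithmetic is immediate.
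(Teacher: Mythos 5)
Your argument is correct and follows essentially the same route as the paper's proof: both observe that, by Definition~\ref{def_le}, all tuples in $LEG(t)$ share the same QI attribute set and the same generalized QI values, so QI matching yields at least $|LEG(t)|$ indistinguishable candidates and hence a disclosure probability of at most $1/|LEG(t)|$. Your version merely spells out more explicitly the adversary model and the step from indistinguishability to the uniform bound, which the paper leaves implicit.
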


\begin{proof}
	According to Definition \ref{def_le}, the tuples in the same local equivalence group have the same attributes containing QI values, and all the QI values are generalized to the same form. Consequently, the adversary must obtain at least $|LEG(t)|$ possible tuples by matching QI values, then the probability of identity disclosure is at most $1/|LEG(t)|$.
\end{proof}

\begin{corollary}
	\label{cor_lg}
	A locally generalized table complies with $k$-anonymity, if every local equivalence group contains at least $k$ tuples.
\end{corollary}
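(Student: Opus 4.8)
The plan is to derive the corollary directly from Lemma~\ref{lem_lg}, treating $k$-anonymity as the uniform lower bound on equivalence-group size that forces the per-tuple disclosure probability below $1/k$. First I would recall the definition of $k$-anonymity in this setting: an anonymized table satisfies $k$-anonymity if, for every tuple $t$, an adversary matching QI values cannot narrow the candidate set below $k$ tuples, equivalently the identity-disclosure probability for $t$ is at most $1/k$. This reframes the goal as showing that the hypothesis ``every local equivalence group contains at least $k$ tuples'' implies exactly this bound for each $t$.

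The core argument is short. By Definition~\ref{def_le}, every tuple $t \in T$ lies in a unique local equivalence group $LEG(t)$, and within that group all QI values are generalized to a common form on a common set of QI attributes. Lemma~\ref{lem_lg} then guarantees that the identity-disclosure probability for $t$ is at most $1/|LEG(t)|$. I would next invoke the hypothesis: $|LEG(t)| \geq k$ for every local equivalence group, hence for the particular group containing $t$. Taking reciprocals reverses the inequality, so $1/|LEG(t)| \leq 1/k$. Chaining the two bounds yields that the identity-disclosure probability for $t$ is at most $1/k$.

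To finish, I would observe that $t$ was arbitrary, so the bound $1/k$ holds uniformly over all tuples in $T$; by the definition of $k$-anonymity recalled above, the locally generalized table therefore complies with $k$-anonymity. The only subtlety worth stating explicitly is the monotonicity step $|LEG(t)| \geq k \Rightarrow 1/|LEG(t)| \leq 1/k$, which is valid precisely because both quantities are positive.

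I do not expect a genuine obstacle here, since the result is essentially a quantifier-level specialization of the lemma together with a monotone substitution. If anything, the one point demanding care is making the logical direction of the hypothesis precise: the condition is a guarantee on \emph{every} local equivalence group, and the conclusion must hold for \emph{every} tuple, so I would be careful to phrase the argument as ``fix an arbitrary $t$, apply the lemma to its group, then apply the hypothesis to that same group,'' rather than conflating the universal quantifier over groups with the universal quantifier over tuples.
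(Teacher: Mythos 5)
Your proposal is correct and follows essentially the same route as the paper's proof: invoke Lemma~\ref{lem_lg} to bound the disclosure probability by $1/|LEG(t)|$, apply the hypothesis $|LEG(t)| \ge k$, and take reciprocals to conclude the bound $1/k$ for every tuple. The extra care you take with the quantifiers and the monotonicity step is sound but does not change the argument.
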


\begin{proof}
	Given a locally generalized table in which every local equivalence group includes at least $k$ tuples, for any tuple $t \in T$, we have
	\begin{center}
		$|LEG(t)| \ge k$,
	\end{center}
	where $|LEG(t)|$ is the size of the local equivalence group including $t$. Then
	\begin{center}
		$\frac{1}{|LEG(t)|} \le \frac{1}{k}$.
	\end{center}
	According to Lemma \ref{lem_lg}, the probability of identity disclosure for any tuple is at most $1/k$. Therefore, the locally generalized table complies with $k$-anonymity.
\end{proof}

Next, we discuss the protection for sensitive values in the locally bucketized table. Suppose an adversary knows $t$'s existence and QI values, then attempts to infer $t$'s sensitive value $s$ from the locally bucketized table. The adversary needs to find $t$'s possible records in the locally bucketized table by matching the QI values.

\begin{definition}[Matching Tuple]
	\label{def_mt}
	Given a locally bucketized table $T^{buc}$, and for any tuple $t \in T$, a tuple $mt \in T^{buc}$ is a matching tuple of $t$ if each QI value of $t$ matches that of $mt$.
\end{definition}

\begin{definition}[Matching Bucket]
	\label{def_mb}
	Given a locally bucketized table $T^{buc}$, and for any tuple $t \in T$, a local bucket $mb$ within an attribute is a matching bucket of $t$ if there is at least a matching tuple of $t$ in $mb$.
\end{definition}

For example, when the adversary matches Mark in the locally bucketized table of Figure \ref{fig_lb}, he can infer that Mark's matching tuple is with ID 1004, and the matching buckets inside the attributes of age and disease are with ID 1 and 2, respectively.

We denote $p(t,s)$ as the probability that the sensitive value $s$ of $t$ is exposed, and let $p(t,b)$ represent the probability that $t$ is in the bucket $b$. Then we have the following lemma and corollary.

\begin{lemma}
	\label{lem_lb}
	Given a locally bucketized table, for any tuple $t \in T$, the probability that any sensitive value $s$ of $t$ is exposed is as follows:
	\begin{center}
		$p(t,s) \leq \sum_{mb}p(t,mb)\frac{|mb(s')|}{|mb|}$,
	\end{center}
	where $|mb(s')|$ is the number of the most occurrence sensitive value $s'$ in the matching bucket $mb$, and $|mb|$ is the size of $mb$.
\end{lemma}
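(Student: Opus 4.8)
The plan is to derive the bound from the law of total probability over the matching buckets, combined with the fact that local bucketization renders the sensitive values inside a single bucket indistinguishable from the adversary's viewpoint. Since the adversary knows $t$'s QI values, Definitions \ref{def_mt} and \ref{def_mb} tell us that the only candidate locations for $t$'s sensitive value on the attribute are the matching buckets $mb$, and $t$'s true value lies in exactly one of them (in particular, $t$'s own record is a matching tuple of $t$, so the bucket holding $t$'s value is itself a matching bucket). Treating the events ``$t$ lies in $mb$'' as a partition over these candidates, the law of total probability gives
\[
p(t,s) = \sum_{mb} p(t,mb)\, p\big(s \mid t \in mb\big),
\]
where $p(s \mid t \in mb)$ is the probability that $t$'s value equals $s$ conditioned on $t$ belonging to $mb$, and the sum ranges over all matching buckets.

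Next I would evaluate the conditional probability using the defining property of a local bucket (Definition \ref{def_lb}): the representation $IDT(ID,BID)$ and $SAT(SA,BID)$ records only which IDs and which sensitive values share a bucket, severing the link between any particular ID and its sensitive value. Hence, once the adversary has narrowed $t$ down to a bucket $mb$, every one of the $|mb|$ sensitive values stored in $mb$ is an equally plausible candidate for $t$'s value, so $p(s \mid t \in mb) = |mb(s)|/|mb|$, where $|mb(s)|$ counts the occurrences of $s$ in $mb$. Substituting this into the expansion yields $p(t,s) = \sum_{mb} p(t,mb)\,|mb(s)|/|mb|$.

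Finally, because $s'$ denotes the most frequently occurring value in $mb$, we have $|mb(s)| \le |mb(s')|$ for every value $s$ that could appear in $mb$, and therefore $|mb(s)|/|mb| \le |mb(s')|/|mb|$ in each term of the sum. Replacing each factor by this uniform upper bound delivers the claimed inequality. The step that warrants the most care is the conditional-uniformity claim: I must argue that, conditioned on $t \in mb$, the adversary genuinely cannot prefer one stored value over another, so the posterior over the values in $mb$ really is proportional to their multiplicities. This is exactly what the decoupled $IDT$/$SAT$ encoding of a local bucket guarantees, and it is the feature that lets the value protection here be analysed independently of the identity protection established in Lemma \ref{lem_lg}.
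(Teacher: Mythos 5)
Your proposal is correct and follows essentially the same route as the paper: a total-probability decomposition over buckets restricted to the matching ones, the identity $p(s\mid t,mb)=|mb(s)|/|mb|$ justified by the decoupled $IDT$/$SAT$ encoding, and the bound $|mb(s)|\le|mb(s')|$. Your treatment is in fact slightly more careful than the paper's, which asserts the within-bucket uniformity without the explicit justification you give.
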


\begin{proof}
	To acquire the sensitive value $s$, the adversary has to calculate the probabilities that $t$ exists in each local bucket and $t$ carries the sensitive value $s$ within each local bucket. Then, the adversary has:
	\begin{center}
		$p(t,s)=\sum_{B}p(t,b)p(s|t,b)$,
	\end{center}
	where $p(s|t,b)$ denotes the probability that $t$ carries the sensitive value $s$ given that $t$ is in the local bucket $b$. The adversary eliminates the local bucket that does not contain any matching tuple of $t$, expressed as follows:
	\begin{center}
		$p(t,b)=0$, if $\nexists mt \in b$.
	\end{center}
	According to Definition \ref{def_mb}, we have:
	\begin{center}
		$p(t,s)=\sum_{mb}p(t,mb)p(s|t,mb)$,
	\end{center}
	The most occurrence sensitive value $s'$ in $mb$ is expressed as:
	\begin{center}
		$|mb(s)| \leq |mb(s')|$.
	\end{center}
	Thus:
	\begin{center}
		$p(s|t,mb)=\frac{|mb(s)|}{|mb|} \leq \frac{|mb(s')|}{|mb|}$,
	\end{center}
	then:
	\begin{center}
		$p(t,s)=\sum_{mb}p(t,mb)p(s|t,mb) \leq \sum_{mb}p(t,mb)\frac{|mb(s')|}{|mb|}$.
	\end{center}
\end{proof}

\begin{corollary}
	\label{cor_lb}
	A locally bucketized table complies with $l$-diversity principle, if every local bucket satisfies the conditions: (1) each sensitive value appears at most once in the local bucket; and (2) the size of each local bucket is at least $l$.
\end{corollary}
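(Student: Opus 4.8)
The plan is to show that under the two stated conditions, the probability $p(t,s)$ that any sensitive value $s$ of $t$ is exposed is bounded by $1/l$, which is precisely the $l$-diversity guarantee. I would start from the bound already established in Lemma~\ref{lem_lb}, namely
\begin{center}
	$p(t,s) \leq \sum_{mb}p(t,mb)\frac{|mb(s')|}{|mb|}$,
\end{center}
so the task reduces to controlling the per-bucket factor $\frac{|mb(s')|}{|mb|}$ using the hypotheses on the local buckets.

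First I would apply condition~(1): since each sensitive value appears at most once in a local bucket, the most-occurring value $s'$ also appears at most once, so $|mb(s')| \leq 1$. Next I would apply condition~(2): since $|mb| \geq l$ for every local bucket, we get $\frac{|mb(s')|}{|mb|} \leq \frac{1}{l}$ for every matching bucket $mb$. Substituting this uniform per-bucket bound into the sum from Lemma~\ref{lem_lb} yields
\begin{center}
	$p(t,s) \leq \sum_{mb}p(t,mb)\frac{1}{l} = \frac{1}{l}\sum_{mb}p(t,mb)$.
\end{center}

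The remaining step is to argue that $\sum_{mb}p(t,mb) \leq 1$. The intuition is that the matching buckets partition the event that $t$ lies in some bucket of the attribute, and since $t$ must reside in exactly one local bucket (by the definition of a partition of the sensitive values), the probabilities $p(t,mb)$ over all matching buckets sum to at most one; the non-matching buckets contribute zero, as noted in the proof of Lemma~\ref{lem_lb}. Combining this with the previous display gives $p(t,s) \leq 1/l$, so the locally bucketized table complies with $l$-diversity.

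The main obstacle I anticipate is justifying $\sum_{mb}p(t,mb) \leq 1$ cleanly, because this requires being careful about the probabilistic model underlying $p(t,mb)$: one must confirm that these are the probabilities of mutually exclusive events (tuple $t$ residing in distinct local buckets of the same attribute) so that they are subadditive, rather than some unnormalized weights. Everything else is a direct substitution of the two numerical conditions into the already-proven Lemma~\ref{lem_lb}, so the argument is short once the summation bound is settled.
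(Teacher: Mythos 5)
Your proposal follows essentially the same route as the paper's proof: start from Lemma~\ref{lem_lb}, use condition~(1) to get $|mb(s')| \leq 1$, use condition~(2) to get $|mb| \geq l$, and pull the uniform factor $1/l$ out of the sum. The only difference is that you explicitly flag the need to justify $\sum_{mb}p(t,mb) \leq 1$, whereas the paper silently asserts $\sum_{mb}p(t,mb) = 1$; your extra caution is reasonable but does not change the argument.
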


\begin{proof}
	According to Lemma \ref{lem_lb}, for any tuple $t \in T$, we have
	\begin{center}
		$p(t,s) \leq \sum_{mb}p(t,mb)\frac{|mb(s')|}{|mb|}$.
	\end{center}
	We confine that each sensitive value appears at most once inside the local bucket, such that for any $s \in T$,
	\begin{center}
		$|mb(s)| \leq 1$.
	\end{center}
	And for any local bucket $b$, we have:
	\begin{center}
		$|b| \geq l$.
	\end{center}
	Then:
	\begin{center}
		$p(t,s) \leq \sum_{mb}p(t,mb)\frac{|mb(s')|}{|mb|} \leq \frac{1}{l}\sum_{mb}p(t,mb)=\frac{1}{l}$.
	\end{center}
	In consequence, the locally bucketized table complies with $l$-diversity by meeting the conditions.
\end{proof}

Finally, we prove that a locally generalized and bucketized table complies with $k$-anonymity and $l$-diversity through satisfying the conditions in Corollary \ref{cor_lg} and \ref{cor_lb}.

\begin{corollary}
	\label{cor_lgb}
	A locally generalized and bucketized table complies with $k$-anonymity and $l$-diversity by meeting the conditions as follows: (1) each local equivalence group contains at least $k$ tuples; (2) each sensitive value appears at most once inside each local bucket; and (3) the size of each local bucket is at least $l$.
\end{corollary}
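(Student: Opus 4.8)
The plan is to derive Corollary \ref{cor_lgb} directly from the two preceding corollaries by observing that its three conditions are exactly the union of hypotheses already analyzed: condition (1) is precisely the hypothesis of Corollary \ref{cor_lg}, while conditions (2) and (3) are precisely the two hypotheses of Corollary \ref{cor_lb}. The structure of the argument is therefore to invoke each corollary separately for the two distinct disclosure threats—identity disclosure and sensitive-value disclosure—and then combine the two conclusions into a single claim about the locally generalized and bucketized table.

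First I would fix an arbitrary tuple $t \in T$ and treat identity disclosure. Since condition (1) guarantees $|LEG(t)| \ge k$, Lemma \ref{lem_lg} bounds the probability of identity disclosure by $1/|LEG(t)| \le 1/k$, so the table complies with $k$-anonymity exactly as in Corollary \ref{cor_lg}. Next I would treat sensitive-value disclosure for each semi-sensitive or sensitive attribute: because conditions (2) and (3) guarantee $|mb(s)| \le 1$ and $|mb| \ge l$ for every matching bucket, Lemma \ref{lem_lb} yields $p(t,s) \le \sum_{mb} p(t,mb)\frac{|mb(s')|}{|mb|} \le \frac{1}{l}\sum_{mb} p(t,mb) = 1/l$, establishing $l$-diversity exactly as in Corollary \ref{cor_lb}.

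The main obstacle—and the only genuinely new content beyond re-quoting the earlier results—is justifying that the two protections do not interfere once combined in a single table. The key step is to appeal to the independence of local generalization and local bucketization built into Definition \ref{def_lgb}: local generalization fixes the local equivalence groups purely through the QI partition and the generalization of QI values, while local bucketization partitions only the sensitive values within each attribute. Neither operation alters the quantity the other bounds—generalization changes no bucket membership or bucket contents, and bucketization neither merges nor shrinks any local equivalence group—so the bound $1/|LEG(t)|$ of Lemma \ref{lem_lg} and the bound of Lemma \ref{lem_lb} remain simultaneously valid. I expect the write-up to be short: once independence is stated, both conclusions hold jointly, and so by Lemma \ref{lem_lg} under condition (1) and by Lemma \ref{lem_lb} under conditions (2) and (3), the locally generalized and bucketized table complies with both $k$-anonymity and $l$-diversity.
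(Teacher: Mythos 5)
Your proposal is correct and follows essentially the same route as the paper's own proof: both reduce the claim to Corollaries \ref{cor_lg} and \ref{cor_lb} and justify combining them by the independence of local generalization and local bucketization from Definition \ref{def_lgb}. Your version is actually more explicit than the paper's (which simply asserts independence and non-overlapping conditions), since you spell out via Lemmas \ref{lem_lg} and \ref{lem_lb} why neither operation degrades the bound established by the other.
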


\begin{proof}
	According to Definition \ref{def_lgb}, the protections of local generalization and local bucketization are independent, and the conditions in Corollary \ref{cor_lg} and \ref{cor_lb} are non-overlapping. As a result, as long as the locally generalized and bucketized table satisfies the corresponding conditions, it complies with $k$-anonymity and $l$-diversity.
\end{proof}

Generally, local generalization enhances the protection of local bucketization because local generalization transforms QI values into the same form that increases the number of the matching tuples of target tuple. For example, the tuple with ID 1006 is in the local bucket of BID 3 within the attribute of disease in Figure \ref{fig_lb}, while he is in that of BID 2 and 3 in Figure \ref{fig_lgb}, because his QI values are generalized to the same form in the local equivalence group of GID 2 that increases the number of his matching tuples. Then the probability that his disease value is disclosed is decreased to $1/4$.

\section{Algorithms}
\label{sec_alg}
This section presents an algorithm to achieve LGB complying with $k$-anonymity and $l$-diversity. In addition, two algorithms are proposed to implement local generalization for different utilization purposes. The main procedure of LGB is given in Algorithm \ref{alg_lgb}.

\begin{algorithm}[!h]
	\caption{LGB($T$, $k$, $l$)}
	\label{alg_lgb}
	\begin{algorithmic}[1]
		\STATE $Attri_{sen}=\{the\ attributes\ including\ sensitive\ values\}$
		\STATE $T_{anony}=T$
		\FOR {\bf{each} $attr \in Attri_{sen}$}
		\STATE $ValuePair_{sen}=\{(id,s)|s \in attr\ and\ s\ is\ sensitive\}$
		\STATE $local\_bucketization(T_{anony},ValuePair_{sen},l)$
		\ENDFOR
		\STATE $local\_generalization(T_{anony},k)$
		\STATE \bf{return} \rm $T_{anony}$
	\end{algorithmic}
\end{algorithm}

The data structure $Attri_{sen}$ (line 1) stores the attributes including sensitive values in $T$, i.e., the set of semi-sensitive attributes and sensitive attributes. The variable $T_{anony}$ (line 2) denotes the anonymized result, and it is initialized as $T$. In each iteration (lines 3 to 6), the algorithm picks an attribute from $Attri_{sen}$ and chooses the tuples with sensitive values (line 4). Then the algorithm divides the tuples into local buckets based on the value of $l$ (line 5). After the loop, the function $local\_generalization(T_{anony},k)$ divides $T_{anony}$ into local equivalence groups according to the value of $k$ (line 7). Finally, the algorithm returns $T_{anony}$ as the anonymized result of $T$ (line 8). Note that, in Algorithm \ref{alg_lgb}, the QI attributes are not contained in $Attri_{sen}$, and $ValuePair_{sen}$ does not include any tuple with QI value in $attr$. Therefore, none of local buckets contains any QI value.

The procedure comprises two main parts: local bucketization (line 5) and local generalization (line 7). We elaborate each part in the rest of this section.

\subsection{Local Bucketization}
\label{sec_lb}
This section presents an efficient algorithm to implement the function $local\_bucketization(T_{anony},ValuePair_{sen},l)$ in Algorithm \ref{alg_lgb}. To preserve more information utility, the algorithm partitions the tuples in $ValuePair_{sen}$ into local buckets and minimizes the range of the sensitive values in each local bucket as far as possible. The detailed procedure is shown in Algorithm \ref{alg_lb}.

\begin{algorithm}[!h]
	\caption{local\_bucketization($T$, $ValuePair$, $l$)}
	\label{alg_lb}
	\begin{algorithmic}[1]
		\STATE $value\_number=\{(value, number)|count \ in \ ValuePair\}$
		\STATE $median=calculate\_median(value\_number)$
		\STATE $VP_{small}=\{(id, s)|(id, s) \in ValuePair \ and \ s \le median\}$
		\STATE $VP_{big}=\{(id, s)|(id, s) \in ValuePair \ and \ s > median\}$
		\IF {$check\_condition(VP_{small},l)$ $\bf{and}\rm$ $check\_condition(VP_{big},l)$}
		\STATE $local\_bucketization(T,VP_{small},l)$
		\STATE $local\_bucketization(T,VP_{big},l)$
		\ELSE
		\STATE $divide\_buckets(T,ValuePair,l)$
		\ENDIF
	\end{algorithmic}
\end{algorithm}

The algorithm recursively divides $ValuePair$ into two smaller sets, and their ranges of sensitive values are non-overlapping. The data structure $value\_number$ stores every sensitive value and the number of occurrence counted in $ValuePair$ (line 1). The algorithm calculates the median value of the sensitive values based on their numbers in $value\_number$ (line 2), and divides $ValuePair$ into two smaller sets (lines 3 and 4). The function $check\_condition(ValuePair,l)$ checks whether $ValuePair$ can be divided into local buckets complying with $l$-diversity (line 5), such that the product of the number of the most occurrence sensitive value and the value of $l$ is not more than the size of $ValuePair$. If both $VP_{small}$ and $VP_{big}$ meet the condition, the algorithm makes recursive calls of $local\_bucketization(T,VP_{small},l)$ and $local\_bucketization(T,VP_{big},l)$ (lines 6 and 7). Otherwise, the algorithm divides the tuples in $ValuePair$ into local buckets (line 9). The function $divide\_buckets(T,ValuePair,l)$ is implemented by the assignment algorithm of $m$-Invariance \cite{Minvar} to satisfy the conditions in Corollary \ref{cor_lb}.

\begin{proposition}
	$T_{anony}$ complies with $l$-diversity after local bucketization phase.
\end{proposition}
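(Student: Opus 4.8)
The plan is to show that after the local bucketization phase of Algorithm~\ref{alg_lgb} completes, the resulting table $T_{anony}$ satisfies both conditions of Corollary~\ref{cor_lb} for every semi-sensitive and sensitive attribute, whence $l$-diversity follows immediately. The key observation is that Algorithm~\ref{alg_lgb} invokes $local\_bucketization$ independently on each attribute in $Attri_{sen}$ (lines~3--6), and by the independence of local bucketization across attributes (established after Definition~\ref{def_lb}), it suffices to verify the two conditions for the buckets produced within a single arbitrary attribute.

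First I would analyze the recursion in Algorithm~\ref{alg_lb}. The procedure partitions $ValuePair$ at the median into two range-disjoint sets $VP_{small}$ and $VP_{big}$, recursing only when \emph{both} halves pass $check\_condition(\cdot,l)$, and otherwise calling $divide\_buckets$ on the current set. I would argue by induction on the size of $ValuePair$ that every invocation eventually reaches a $divide\_buckets$ call on a set that satisfies $check\_condition(\cdot,l)$, namely that the product of the count of the most-frequent sensitive value and $l$ does not exceed $|ValuePair|$. The base case is when the $\mathbf{if}$-guard fails at the top level; the inductive step uses that a recursive call is made only after confirming the guard holds for the child set, so by induction each child ultimately terminates in a conforming $divide\_buckets$ call. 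This establishes that $divide\_buckets$ is always applied to a set meeting the $l$-diversity feasibility condition.

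Next I would appeal to the stated property of $divide\_buckets$: it is implemented by the assignment algorithm of $m$-Invariance~\cite{Minvar} \emph{to satisfy the conditions in Corollary~\ref{cor_lb}}. Thus on any input meeting the feasibility condition, $divide\_buckets$ outputs local buckets in which (1)~each sensitive value appears at most once and (2)~each bucket has size at least $l$. Combined with the recursion analysis, every local bucket produced for the attribute satisfies both conditions of Corollary~\ref{cor_lb}. Applying Corollary~\ref{cor_lb} to each attribute and invoking the cross-attribute independence then yields that $T_{anony}$ complies with $l$-diversity after the loop, which is exactly the claim.

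The main obstacle I anticipate is the induction on the recursion: I must confirm that the median split genuinely reduces problem size and terminates, and more delicately that passing $check\_condition$ on a parent set does not guarantee both children pass it — indeed the algorithm explicitly tests the children and falls back to $divide\_buckets$ precisely when they do not. The subtle point is therefore that \emph{whenever} $divide\_buckets$ is called (line~9), the set it receives satisfies the feasibility condition; this holds at the initial call only if the top-level input does, and at recursive levels because a recursive call was gated by the guard. I would need to make explicit the implicit assumption that the very first call to $local\_bucketization$ receives a set for which valid $l$-diverse bucketization exists, and then carry the invariant that the feasibility condition holds at each $divide\_buckets$ site down through the recursion.
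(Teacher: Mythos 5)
Your proposal follows essentially the same route as the paper's proof: both reduce the claim to showing that every $divide\_buckets$ call receives an $l$-eligible set (guaranteed by $check\_condition$ gating the recursive calls) and then invoke the $m$-Invariance assignment algorithm to produce local buckets satisfying the two conditions of Corollary~\ref{cor_lb}, finishing by applying that corollary attribute by attribute. Your explicit observation that the very first call to $local\_bucketization$ must itself receive a feasible set is a point the paper's proof silently assumes, so your version is, if anything, slightly more careful.
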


\begin{proof}
	The function $divide\_buckets(T,ValuePair,l)$ is implemented by the assignment algorithm of $m$-Invariance, where the parameters $l$ and $m$ are mathematically equivalent. In each recursion of Algorithm \ref{alg_lb}, both sets of $VP_{small}$ and $VP_{big}$ are checked to satisfy $l$-eligible condition \cite{Minvar}. According to $m$-Invariance, the assignment algorithm divides the tuples into $m$-unique buckets. Then the size of each generated local bucket is at least $l$, and every sensitive value appears at most once in each local bucket. As a result, all the local buckets within the corresponding attribute meet the conditions in Corollary \ref{cor_lb} based on the value of $l$. After the loop (lines 3 to 6) in Algorithm \ref{alg_lgb}, the tuples with sensitive values inside each semi-sensitive attribute and sensitive attribute are partitioned into $l$-unique local buckets, so that $T_{anony}$ satisfies $l$-diversity principle.
\end{proof}

\subsection{Local Generalization}
\label{sec_lg}
In this section, we propose two algorithms to implement local generalization based on multi-dimensional partition and minimizing NCP, separately. The experiments we conducted in Section \ref{sec_exp} will show their different performances. We elaborate each algorithm as follows.

\subsubsection{Based on Multi-Dimensional Partition}
\label{sec_mul}
Previous multi-dimensional partition \cite{Mon} is an effective and popular approach to divide the tuples into equivalence groups. However, it does not apply to the publishing scenario of personalized privacy requirement. We combine multi-dimensional partition and QI partition to divide the tuples into local equivalence groups according to their specific QI values.

\begin{definition}[Multi-Dimensional QI Partition]
	\label{def_mdp}
	Given a microdata table $T$ with $d$ attributes and a QI partition with $m$ subsets $\{T_{1}, T_{2}, \cdots, T_{m}\}$, a multi-dimensional QI partition divides the tuples into non-overlapping multi-dimensional regions within each subset that covers $D[A^{i}_{1}] \times D[A^{i}_{2}] \times \cdots \times D[A^{i}_{d}] (1 \leq i \leq m)$, where $D[A^{i}_{j}] (1 \leq j \leq d)$ denotes the domain of attribute $A_{j}$ inside subset $T_{i}$, and for any tuple $t \in T$, $(t[A_{1}], t[A_{2}], \cdots, t[A_{d}])$ is mapped in the unique region.
\end{definition}

To reduce information loss as far as possible, the size of each local equivalence group should be minimized to only satisfy $k$-anonymity, so that each region is divided into smaller ones until at least one of their sizes is less than $k$. Algorithm \ref{alg_lgm} describes the local generalization based on multi-dimensional partition in detail.

\begin{algorithm}
	\caption{local\_generalization$(T,k)$}
	\label{alg_lgm}
	\begin{algorithmic}[1]
		\STATE $T_{anony}=\emptyset$
		\STATE $T_{subsets}=\{the\ subsets\ of\ T\ divided\ by\ QI\ partition\}$
		\FOR {\bf{each} $T_{set} \in T_{subsets}$}
		\STATE $partition\_set=\{T_{set}\}$
		\WHILE {$partition\_set \neq \emptyset$}
		\STATE $oper\_set=pick\_set(partition\_set)$
		\STATE $partition\_set=partition\_set-oper\_set$
		\STATE $QI\_set = QI[oper\_set]$
		\STATE $partition\_flag = false$
		\WHILE {$QI\_set \neq \emptyset$}
		\STATE $attri=choose\_dimension(oper\_set,QI\_set)$
		\STATE $split\_value=cal\_median(oper\_set,attri)$
		\STATE $S_{l}=\{t|t \in oper\_set\ and\ t[attri] \leq split\_value\}$
		\STATE $S_{r}=\{t|t \in oper\_set\ and\ t[attri] > split\_value\}$
		\IF {$|S_{l}| \geq k$ $\bf{and}\rm$ $|S_{r}| \geq k$}
		\STATE $partition\_set=partition\_set+S_{l}$
		\STATE $partition\_set=partition\_set+S_{r}$
		\STATE $partition\_flag=true$
		\STATE $\bf{break}$
		\ELSE
		\STATE $QI\_set=QI\_set-attri$
		\ENDIF
		\ENDWHILE
		\IF {$partition\_flag\ is\ false$}
		\STATE $gen\_set=generalize(oper\_set)$
		\STATE $T_{anony}=T_{anony}+gen\_set$
		\ENDIF
		\ENDWHILE
		\ENDFOR
		\STATE \bf{return} $T_{anony}$
	\end{algorithmic}
\end{algorithm}

The data structures $T_{anony}$ and $T_{subsets}$ store the anonymized result and the subsets of $T$ divided by QI partition, respectively (lines 1 and 2). In each iteration (lines 3 to 29), the algorithm picks and divides a subset $T_{set}$ into local equivalence groups. The data structure $partition\_set$ contains the sets of tuples which has not been generalized, and it includes $T_{set}$ in the beginning (line 4). As long as $partition\_set$ is not empty (line 5), the algorithm picks and eliminates a set from $partition\_set$ (lines 6 and 7). The data structure $QI\_set$ denotes the set of the attributes including QI values in $oper\_set$ (line 8), and $partition\_flag$ represents whether $oper\_set$ can be divided (line 9). While $QI\_set$ is not empty (line 10), the algorithm chooses an attribute $attri$ and calculates the median value (lines 11 and 12), then divides $oper\_set$ into two smaller sets (lines 13 and 14). If the sizes of $S_{l}$ and $S_{r}$ are both bigger than or equal to $k$ (line 15), the algorithm adds $S_{l}$ and $S_{r}$ into $partition\_set$ (lines 16 and 17) and sets $partition\_flag$ as $true$ (line 18), then breaks the while loop (line 19). Otherwise, $attri$ is eliminated from $QI\_set$ (line 21). If $partition\_flag$ is false after the while loop, not a single attribute can be used to divide $oper\_set$ into the smaller sets that complies with $k$-anonymity (line 24). The algorithm generalizes the QI values in $oper\_set$ and adds the generalized set $gen\_set$ into $T_{anony}$ (lines 25 and 26). Finally, the algorithm returns $T_{anony}$ as the generalized result (line 30).

Local generalization based on multi-dimensional partition is well suited as a common approach because it evenly divides the tuples into local equivalence groups that reduces much information loss. In practice, some microdata tables are published for particular purposes, and the information utility should be evaluated by given metric. Next, we present another algorithm to achieve local generalization which preserves information utility evaluated by specific metric as far as possible.

\subsubsection{Based on Minimizing NCP}
\label{sec_ncp}
In this section, we propose a utility-based algorithm to implement local generalization through using NCP \cite{Ncp} as information metric. For any tuple $t \in T$, $t$'s value $v$ is generalized to $v^{*}$ on the categorical attribute $A_{cat}$, then 
\begin{center}
	$NCP_{A_{cat}}(t)=\frac{size(v^{*})}{|A_{cat}|}$,
\end{center}
where $size(v^{*})$ is the number of the leaf nodes that are the descendants of $v^{*}$ in the hierarchy tree of $A_{cat}$, and $|A_{cat}|$ denotes the number of distinct values on attribute $A_{cat}$. While $t$'s value $v$ is generalized to $[v^{*}_{lower}, v^{*}_{upper}]$ on the numeric attribute $A_{num}$, then 
\begin{center}
	$NCP_{A_{num}}(t)=\frac{v^{*}_{upper}-v^{*}_{lower}}{range(A_{num})}$,
\end{center}
where $v^{*}_{lower}$ and $v^{*}_{upper}$ are the lower bound and upper bound of generalized range, respectively, and $range(A_{num})$ is the range of all the values on attribute $A_{num}$. The information loss of whole generalized table is represented as
\begin{center}
	$NCP(T)=\sum_{t \in T}\sum_{i=1}^{d}NCP_{A_{i}}(t)$.
\end{center}

The local generalization complying with $k$-anonymity based on minimizing NCP is described as Algorithm \ref{alg_lgn}. The data structures $T_{anony}$ and $T_{subsets}$ store the anonymized result and the subsets of $T$ divided by QI partition, respectively (lines 1 and 2). In each iteration (lines 3 to 24), the algorithm picks and divides a subset $T_{set}$ into local equivalence groups (line 3). The data structure $QI\_set$ denotes the set of the attributes including QI values in $T_{set}$ (line 4), and $partition\_set$ represents the sets of tuples which has not been generalized (line 5). While $partition\_set$ is not empty (line 6), the algorithm picks and eliminates a set $oper\_set$ from $partition\_set$ (lines 7 and 8). If the size of $oper\_set$ is smaller than $2k$ (line 9), the tuples in $oper\_set$ are generalized and added into $T_{anony}$ (lines 10 and 11). Otherwise, the function $find\_seeds(oper\_set, QI\_set)$ returns two seed records that maximizes the value of NCP based on $QI\_set$ (line 13), and the algorithm divides $oper\_set$ into two smaller sets according to the seed tuples (line 14). The functions $find\_seeds(oper\_set, QI\_set)$ and $divide\_table(oper\_set, sd_{1}, sd_{2}, QI\_set)$ can be implemented by the top-down algorithm in \cite{Ncp}. Next, if the sizes of $T_{1}$ and $T_{2}$ are both more than or equal to $k$, the algorithm adds them into $partition\_set$ (lines 15 to 17), or else, $oper\_set$ is generalized and added into $T_{anony}$ (lines 18 to 20). Finally, the algorithm returns $T_{anony}$ as the generalized result (line 25).

\begin{algorithm}
	\caption{local\_generalization$(T,k)$}
	\label{alg_lgn}
	\begin{algorithmic}[1]
		\STATE $T_{anony}=\emptyset$
		\STATE $T_{subsets}=\{the\ subsets\ of\ T\ divided\ by\ QI\ partition\}$
		\FOR {\bf{each} $T_{set} \in T_{subsets}$}
		\STATE $QI\_set=QI[T_{set}]$
		\STATE $partition\_set=\{T_{set}\}$
		\WHILE {$partition\_set \neq \emptyset$}
		\STATE $oper\_set=pick\_set(partition\_set)$
		\STATE $partition\_set=partition\_set-oper\_set$
		\IF {$|oper\_set|<2k$}
		\STATE $gen\_set=generalize(oper\_set)$
		\STATE $T_{anony}=T_{anony}+gen\_set$
		\ELSE
		\STATE $sd_{1}, sd_{2}=find\_seeds(oper\_set, QI\_set)$
		\STATE $T_{1}, T_{2}=divide\_table(oper\_set, sd_{1}, sd_{2}, QI\_set)$
		\IF {$|T_{1}| \geq k$ $\bf{and}\rm$ $|T_{2}| \geq k$}
		\STATE $partition\_set=partition\_set+T_{1}$
		\STATE $partition\_set=partition\_set+T_{2}$
		\ELSE
		\STATE $gen\_set=generalize(oper\_set)$
		\STATE $T_{anony}=T_{anony}+gen\_set$
		\ENDIF
		\ENDIF
		\ENDWHILE
		\ENDFOR
		\STATE \bf{return} $T_{anony}$
	\end{algorithmic}
\end{algorithm}

\begin{proposition}
	$T_{anony}$ complies with $k$-anonymity after local generalization phase.
\end{proposition}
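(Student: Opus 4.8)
The plan is to reduce the claim to the size condition of Corollary \ref{cor_lg}, which already guarantees that a locally generalized table satisfies $k$-anonymity whenever every local equivalence group contains at least $k$ tuples. It therefore suffices to show that every set the algorithm eventually commits to $T_{anony}$ as a local equivalence group has cardinality at least $k$. I would argue this uniformly for both implementations (Algorithm \ref{alg_lgm} and Algorithm \ref{alg_lgn}), since both realize the same $local\_generalization(T,k)$ interface and both write a set into $T_{anony}$ only through a $generalize(\cdot)$ step.

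The core of the argument is a loop invariant on the working collection $partition\_set$: every set it holds has size at least $k$. For the inductive step I would inspect the only places where new sets are inserted. In Algorithm \ref{alg_lgm} these are lines 16--17, which are guarded by the test $|S_{l}| \geq k$ and $|S_{r}| \geq k$ at line 15; in Algorithm \ref{alg_lgn} these are lines 16--17, guarded by $|T_{1}| \geq k$ and $|T_{2}| \geq k$ at line 15. In both algorithms a split is accepted only when \emph{both} fragments meet the threshold, so no set violating the bound is ever enqueued, and the invariant survives every pass of the while loop.

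Next I would discharge the output step. A set $oper\_set$ becomes a local equivalence group exactly when it is generalized: in Algorithm \ref{alg_lgm} this occurs only when $partition\_flag$ stays $false$, i.e. no attribute admits an admissible split, and in Algorithm \ref{alg_lgn} it occurs either when $|oper\_set| < 2k$ or when the attempted split fails the size test. In each case $oper\_set$ was just drawn from $partition\_set$, so by the invariant $|oper\_set| \geq k$, and the discarded trial split $S_{l},S_{r}$ (resp. $T_{1},T_{2}$) never shrinks the set actually emitted. Hence every local equivalence group in $T_{anony}$ has at least $k$ tuples, and Corollary \ref{cor_lg} delivers $k$-anonymity.

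The step I expect to be the main obstacle is the base case of the invariant, namely that each initial subset $T_{set}$ produced by the QI partition already contains at least $k$ tuples. The splitting logic can only maintain or refine this property; it cannot repair a subset that starts out too small. I would therefore state explicitly the standing assumption, inherent to the personalized setting, that every group of tuples sharing the same attribute-sensitivity pattern $QI[t]$ is large enough to be $k$-anonymized. If some pattern is supported by fewer than $k$ individuals, no generalization-only scheme can achieve $k$-anonymity for them, so that case must be excluded or resolved by suppression before $local\_generalization$ is invoked.
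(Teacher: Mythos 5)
Your argument is essentially the paper's own: maintain the invariant that every set placed in $partition\_set$ has size at least $k$ (enforced by the guards at line 15 of Algorithms \ref{alg_lgm} and \ref{alg_lgn}), conclude that every generalized $oper\_set$ has at least $k$ tuples, and invoke Corollary \ref{cor_lg}. The one place you go beyond the paper is the base case: you are right that the initial subsets $T_{set}$ produced by the QI partition are inserted into $partition\_set$ with no size check, so a sensitivity pattern supported by fewer than $k$ individuals would be emitted as an undersized local equivalence group; the paper's proof silently assumes this cannot happen, and your explicit standing assumption (or a suppression preprocessing step) is needed to make the argument airtight.
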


\begin{proof}
	In both Algorithm \ref{alg_lgm} and \ref{alg_lgn}, the size of subset must be more than or equal to $k$ prior to putting it into $partition\_set$, and only $oper\_set$ selected from $partition\_set$ is generalized and added into $T_{anony}$. Therefore, the sizes of all the local equivalence groups are at least $k$ that satisfies the condition in Corollary \ref{cor_lg}. Then $T_{anony}$ satisfies $k$-anonymity principle.
\end{proof}

\section{Experiments}
\label{sec_exp}
This section evaluates the efficiency of LGB technique. We use the real US Census data \cite{data}, eliminate the tuples with missing values, and randomly select 31,055 tuples with nine attributes. The QI attributes are relationship, marital status, race, education and hours per week, the semi-sensitive attributes are sex, age and occupation, and the sensitive attribute is salary. Table \ref{tab_attri} describes the attributes in detail.

\begin{table}[!t]
	\caption{Description of the attributes}
	\label{tab_attri}
	\centering
	\begin{tabular}{|c|c|c|c|c|}
		\hline
		& \bf{Attribute} & \bf{Value Type} & \bf{Sensitivity Type} & \bf{Size}\\
		\hline
		1 & Sex & Categorical & Semi-sensitive & 2\\
		\hline
		2 & Age & Continuous & Semi-sensitive & 73\\
		\hline
		3 & Relationship & Categorical & QI & 13\\
		\hline
		4 & Marital status & Categorical & QI & 6\\
		\hline
		5 & Race & Categorical & QI & 9\\
		\hline
		6 & Education & Categorical & QI & 11\\
		\hline
		7 & Hours per week & Continuous & QI & 93\\
		\hline
		8 & Occupation & Categorical & Semi-sensitive & 257\\
		\hline
		9 & Salary & Continuous & Sensitive & 719\\
		\hline
	\end{tabular}
\end{table}

In Section \ref{sec_alg}, we propose an algorithms to implement LGB that satisfies $k$-anonymity to prevent identity disclosure and $l$-diversity to protect sensitive values. However, the attribute of sex contains only two distinct sensitive values, namely, male and female, so that complying with $l$-diversity principle can not provide effective protection. We severally employ $t$-closeness \cite{tc} to confine the attribute of sex and $l$-diversity to protect the rest semi-sensitive attributes and sensitive attribute. In addition, the algorithms based on multi-dimensional partition and minimizing NCP are denoted as LGB\_MDP and LGB\_NCP, respectively. The experiments will show: (1) the basic property of LGB technique and the different performances between LGB\_MDP and LGB\_NCP; and (2) the effect of the density of sensitive values in the semi-sensitive attributes.

\subsection{Information Metrics}
\label{sec_exp_utility}
In this section, the experiments employ two information metrics, namely, discernibility metric measurement \cite{penal} and NCP, to check the information utility. The parameters $k$ is assigned to 5, 8 and 10, and $l$ is assigned to 5, 8, 10, 12, 15, 18 and 20, respectively, and $t$ is fixed at 0.2. Besides, the percentage of the sensitive values in each semi-sensitive attribute is set to about 20\%.

The first experiment uses discernability metric measurement, denoted as $C_{DM}$, which is given by the equation:
\begin{center}
	$C_{DM}=\sum_{LEG}|LEG|^2$,
\end{center}
where $LEG$ is the local equivalence group, and $|LEG|$ denotes the size of $LEG$. A smaller value of $C_{DM}$ means less generalization and perturbation in the anonymization process. Figure \ref{fig_pen_mon_2} and \ref{fig_pen_td_2} present the results of the anonymized table by LGB\_MDP and LGB\_NCP, respectively.

\begin{figure*}[!t]
	\centering
	\subfigure[]{\includegraphics[width=3.0in]{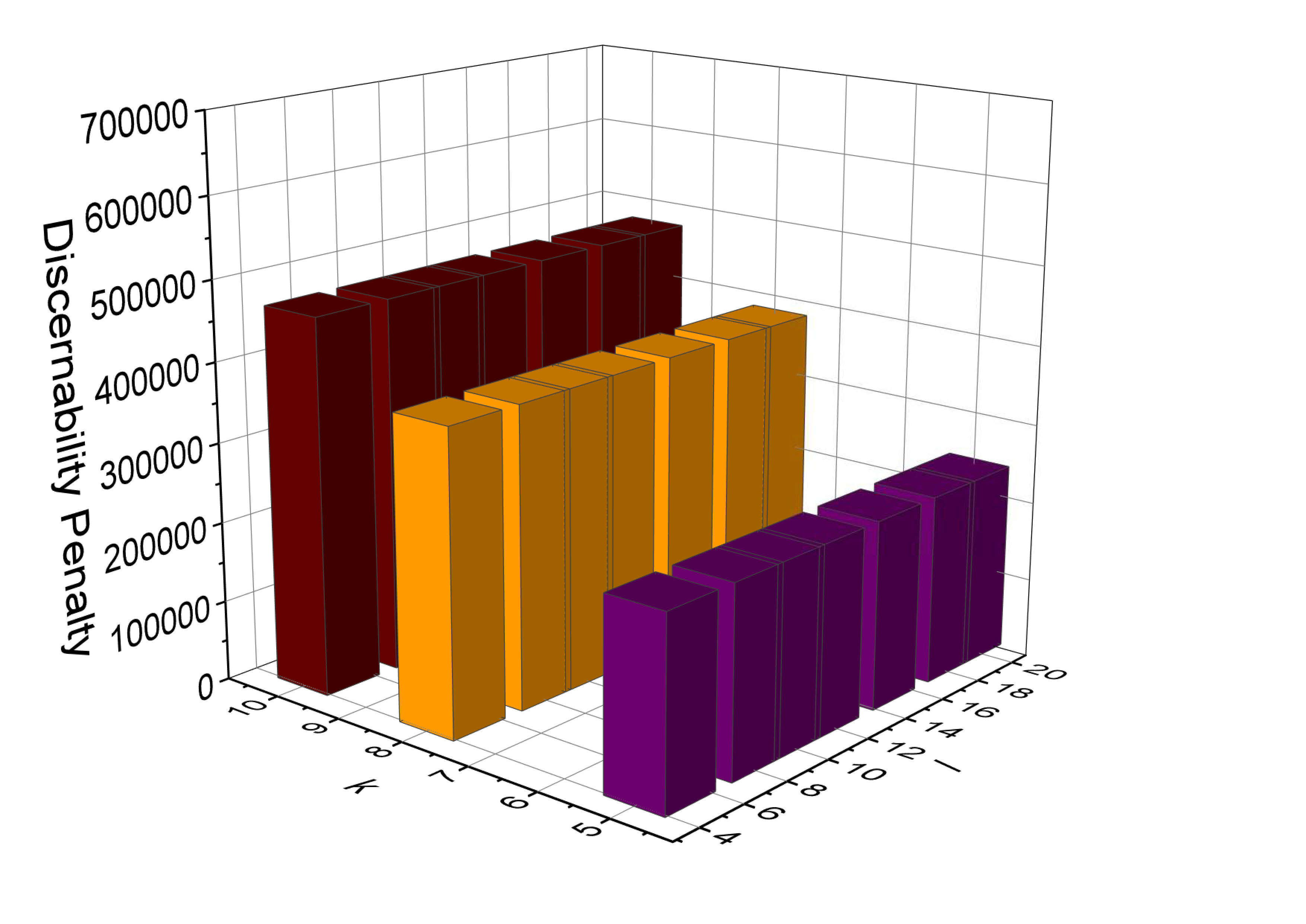}
		\label{fig_pen_mon_2}}
	\hfil
	\subfigure[]{\includegraphics[width=3.0in]{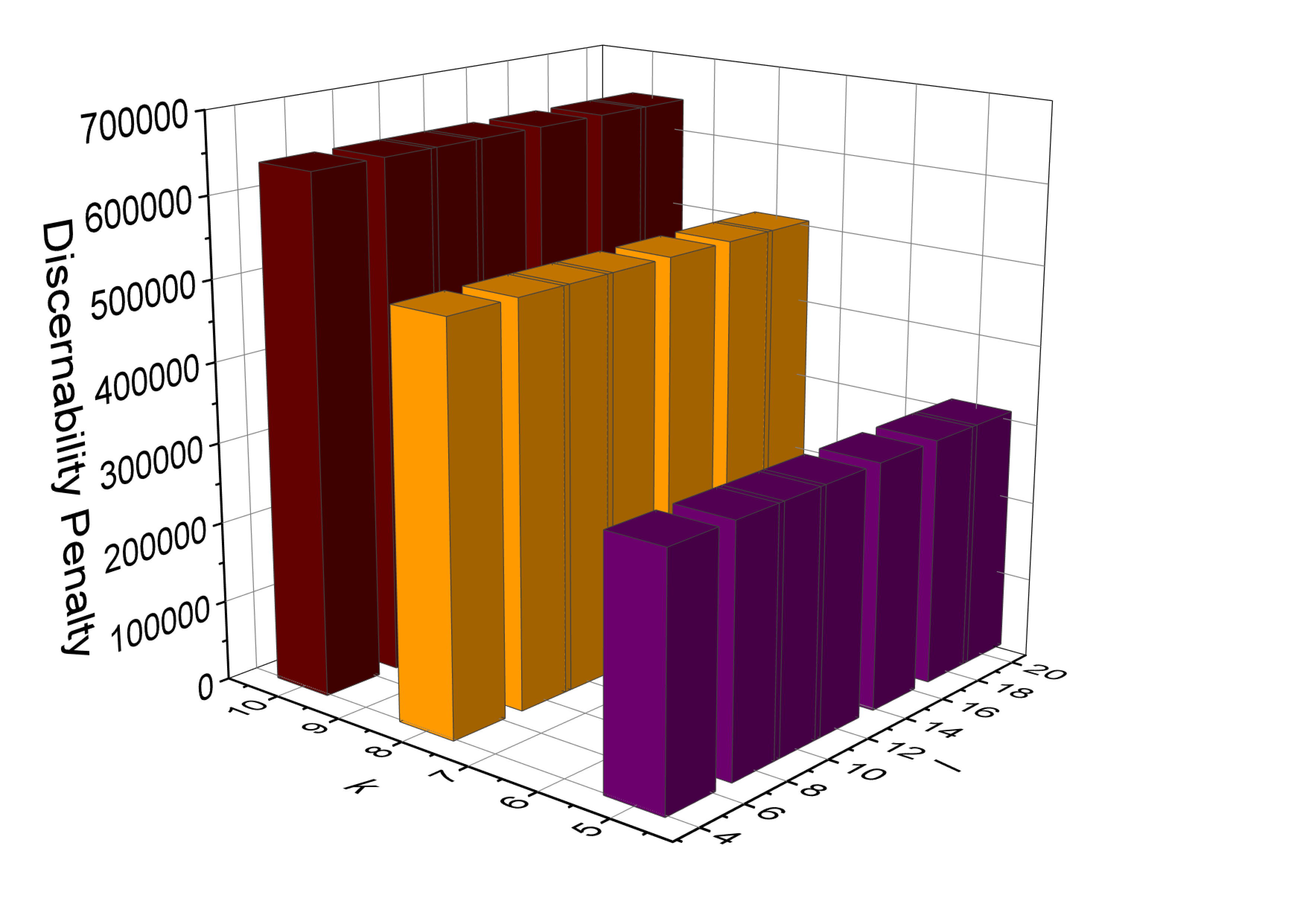}
		\label{fig_pen_td_2}}
	\caption{Discernibility metric results}
	\label{fig_pen}
\end{figure*}

It is obvious that the discernability penalty of LGB\_MDP is much lower than that of LGB\_NCP, so that LGB\_MDP divides the tuples into local equivalence groups more evenly than LGB\_NCP. Besides, the sizes of local equivalence groups are hardly affected by $l$ because the values of discernability penalty are the same when the value of $k$ is fixed. Therefore, the results indicate that the protection of local generalization is barely affected by local bucketization.

Next, the experiment uses NCP as information metric to compare the information quality between LGB\_MDP and LGB\_NCP. Figure \ref{fig_NCP_mon_2} and \ref{fig_NCP_td_2} present the results of LGB\_MDP and LGB\_NCP, respectively. It is shown that the values of LGB\_NCP are clearly lower than that of LGB\_MDP by about 14\% to 17\% although LGB\_MDP divides the tuples into local equivalence groups more evenly than LGB\_NCP. Thus, the utility-based heuristic of LGB\_NCP plays a role in reducing NCP. It proves that LGB is a flexible framework which can be implemented by appropriate algorithms to meet different actual demands.

\begin{figure*}[!t]
	\centering
	\subfigure[]{\includegraphics[width=3.0in]{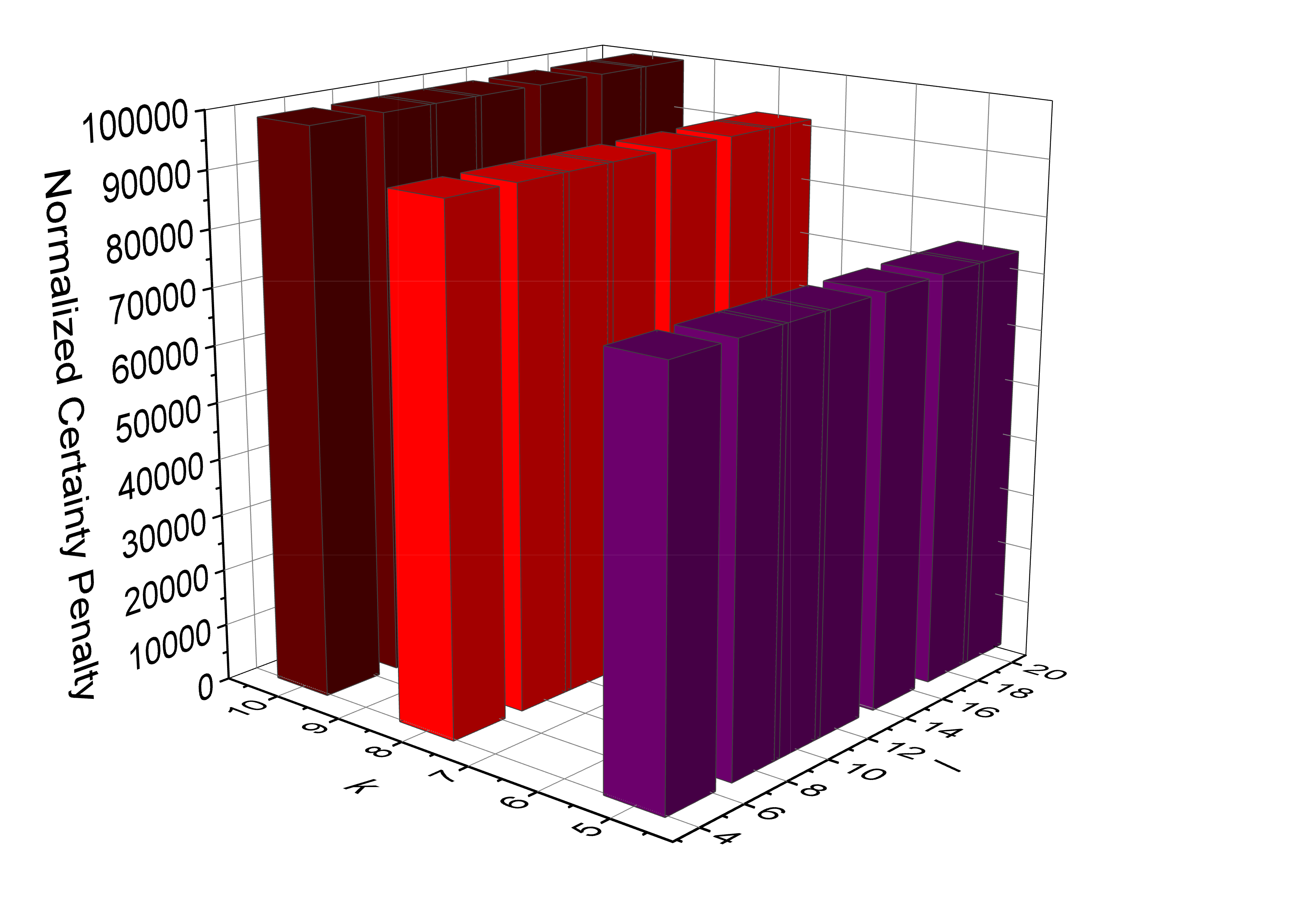}
		\label{fig_NCP_mon_2}}
	\hfil
	\subfigure[]{\includegraphics[width=3.0in]{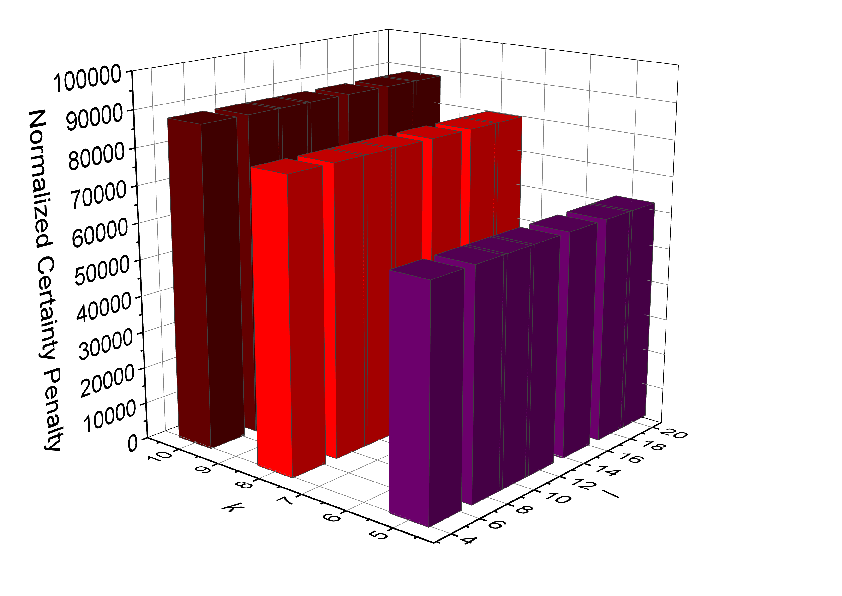}
		\label{fig_NCP_td_2}}
	\caption{NCP results}
	\label{fig_ncp}
\end{figure*}

\subsection{Query Answering}
\label{sec_exp_query}
In this experiment, we use the approach of aggregate query answering \cite{per} to check information utility. We randomly generate 1,000 queries and calculate the average relative error for each anonymized table. The sequence of query is expressed as the form:
\begin{flushleft}
	SELECT SUM(salary) FROM \bf{Microdata}\rm\\
	WHERE \bf{$pred$($A_1$)} AND \bf{$pred$($A_2$)} AND \bf{$pred$($A_3$)} AND \bf{$pred$($A_4$)}.
\end{flushleft}
Specifically, the query condition contains four random QI attributes or semi-sensitive attributes, and the sum of salary is the result for comparison. For categorical attributes, the predicate $pred(A)$ has the following form:
\begin{center}
	$(A=v_1\ or\ A=v_2\ or\ \cdots\ or\ A=v_m)$,
\end{center}
where $v_i(1 \leq i \leq m)$ is a random value from $D[A]$. While for numerical attributes, the predicate $pred(A)$ has the following form:
\begin{center}
	$(A>v)\ or\ (A<v)\ or\ (A=v)$\\ $or\ (A \geq v)\ or\ (A \leq v)\ or\ (A \neq v)$,
\end{center}
where $v$ is a random value from $D[A]$. According to \cite{per}, the relative error rate, denoted as $R_{error}$, is given by the equation:
\begin{center}
	$R_{error}=(Sum_{upper}-Sum_{lower})/Sum_{act}$,
\end{center}
where $Sum_{upper}$ and $Sum_{lower}$ are the upper bound and lower bound of the sum of salary, respectively, and $Sum_{act}$ is the actual value.

\begin{figure*}[!t]
	\centering
	\subfigure[]{\includegraphics[width=3.0in]{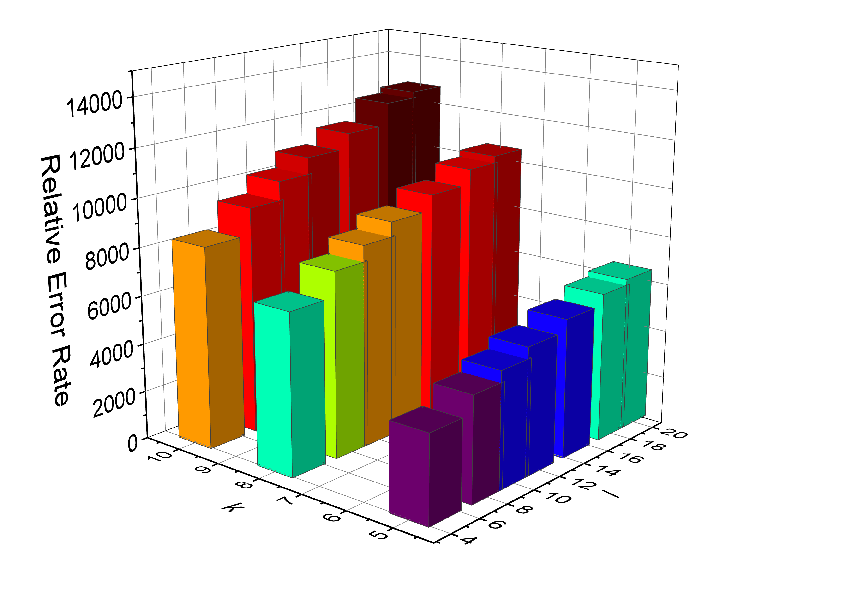}
		\label{fig_query_mon_2}}
	\hfil
	\subfigure[]{\includegraphics[width=3.0in]{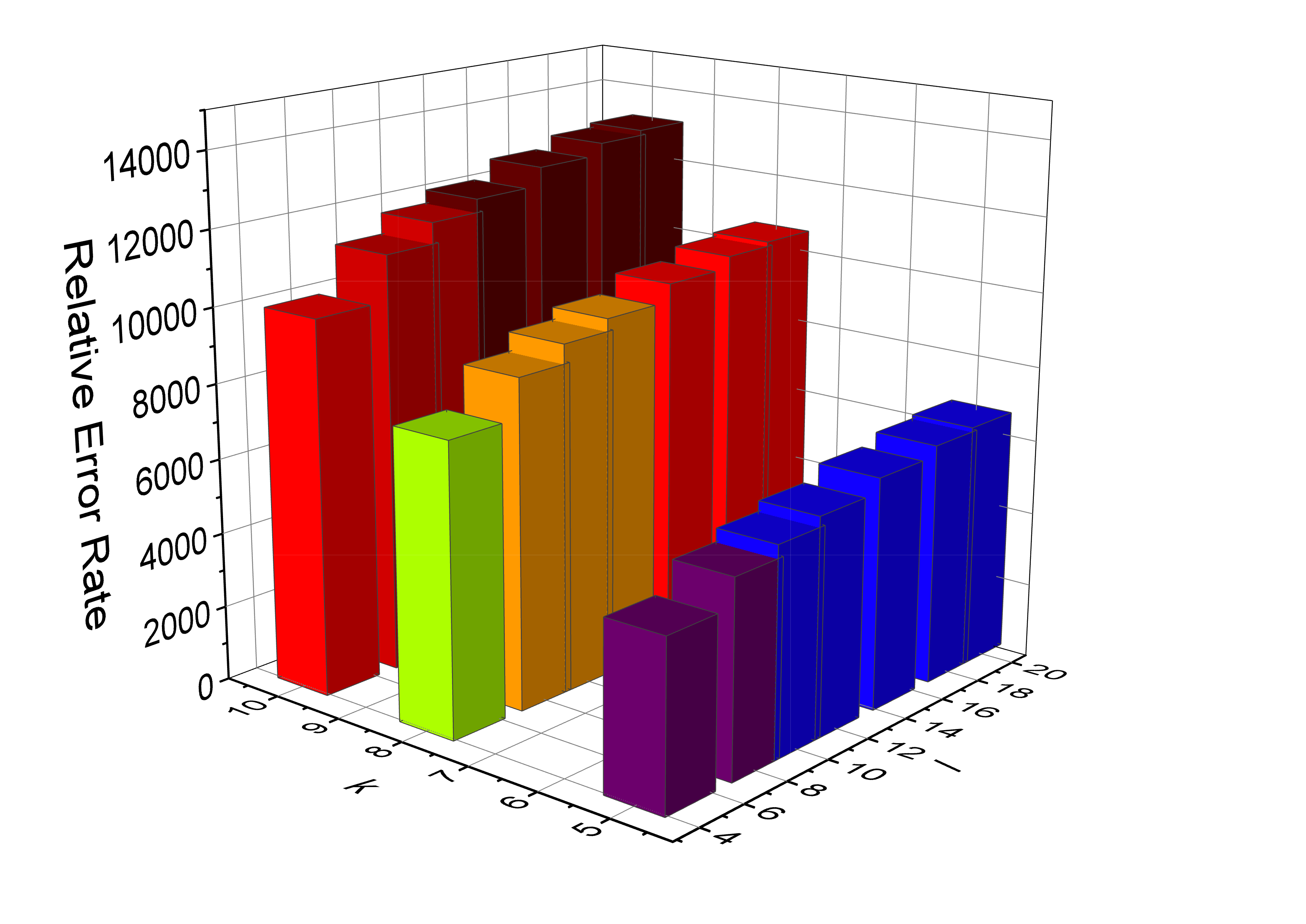}
		\label{fig_query_td_2}}
	\caption{Query answering results}
	\label{fig_query}
\end{figure*}

\begin{figure*}[!t]
	\centering
	\subfigure[]{\includegraphics[width=3.0in]{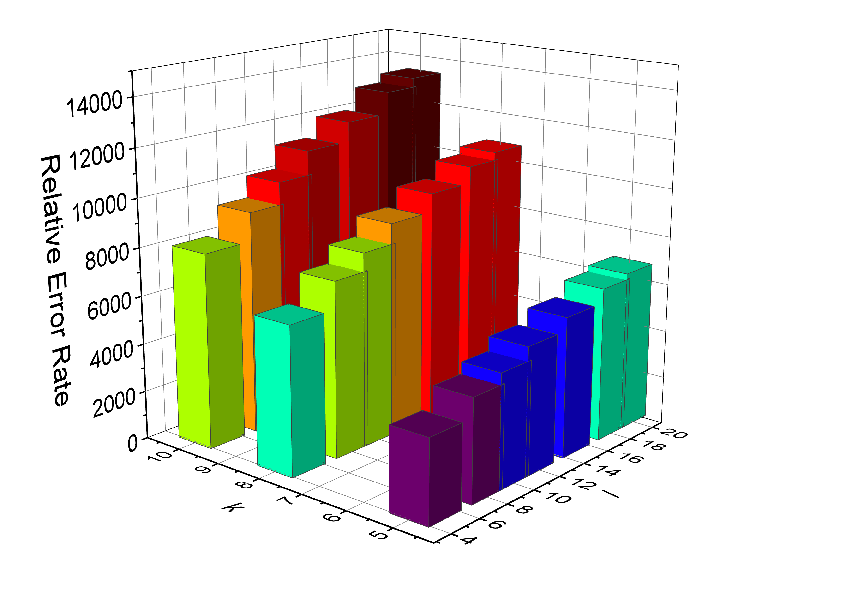}
		\label{fig_query_mon_1}}
	\hfil
	\subfigure[]{\includegraphics[width=3.0in]{Query_mon_2}
		\label{fig_query_mon_}}
	\hfil
	\subfigure[]{\includegraphics[width=3.0in]{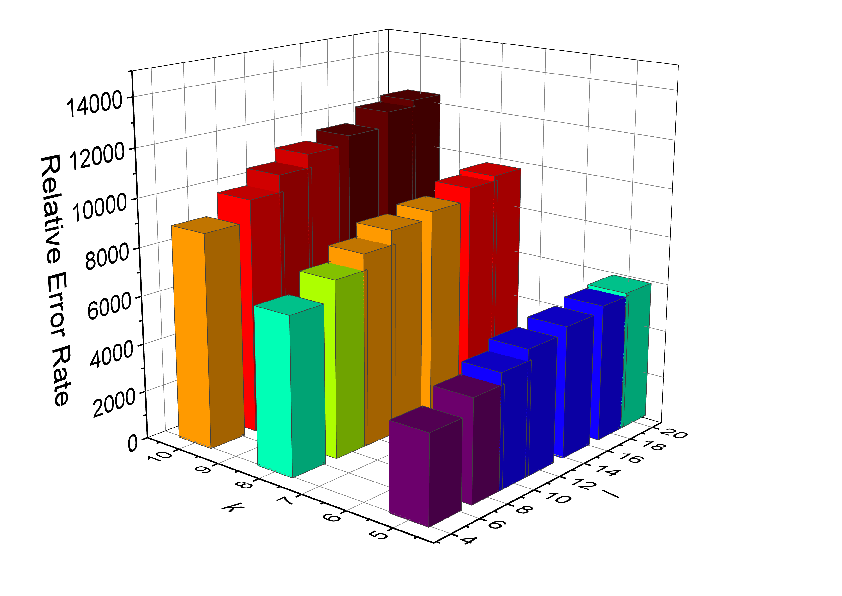}
		\label{fig_query_mon_3}}
	\hfil
	\subfigure[]{\includegraphics[width=3.0in]{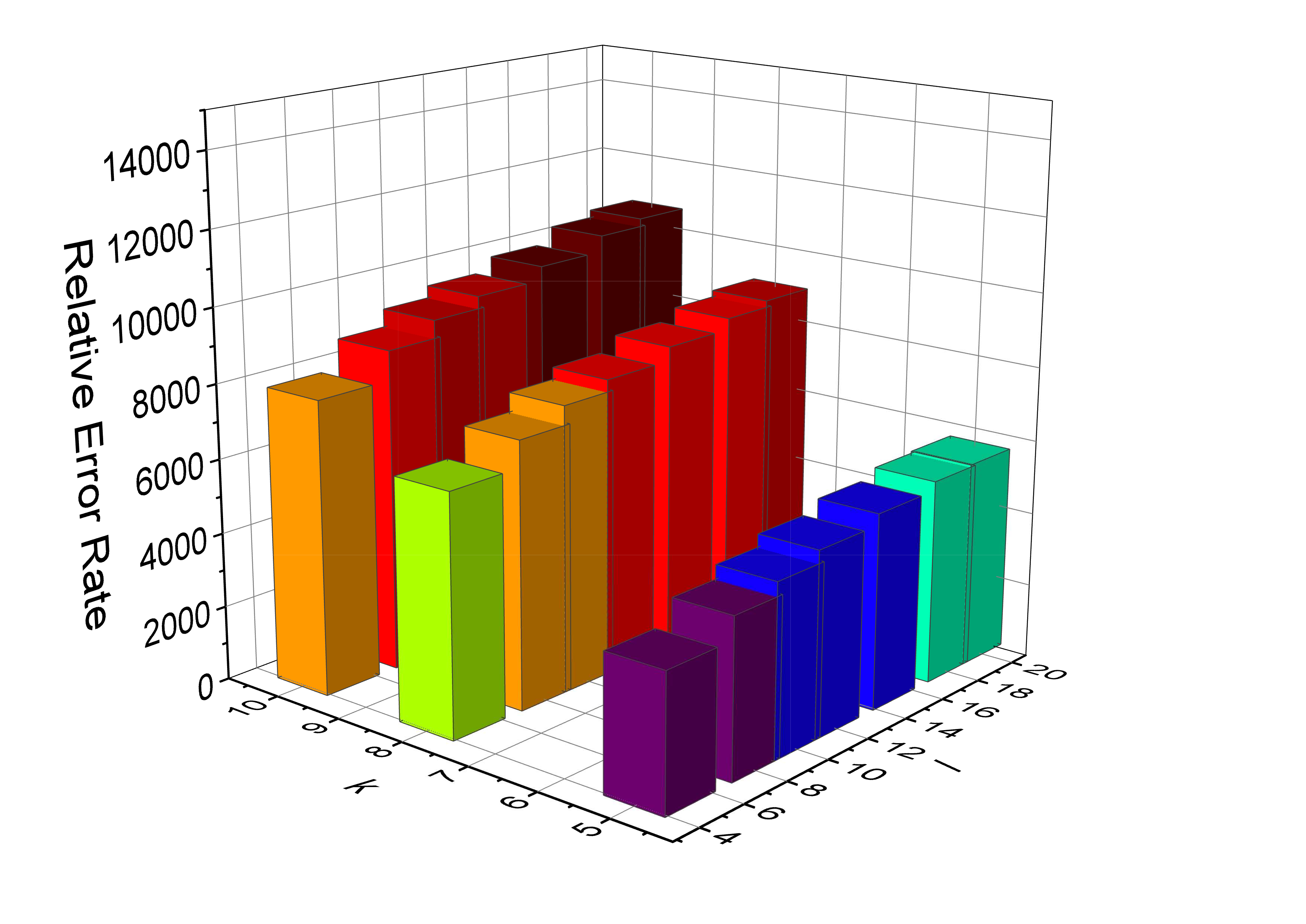}
		\label{fig_query_mon_4}}
	\caption{Different density of sensitive values}
	\label{fig_effect}
\end{figure*}

According to \cite{per}, $Sum_{upper}$ and $Sum_{lower}$ are calculated through ''numbers of hits'' in the group. However, each tuple is contained in the local buckets of different semi-sensitive attributes, we need to count the upper bound and lower bound of the possible numbers of the tuples who satisfies the query conditions in each local bucket of salary. Let $Pro_{t}(A)$ denote the probability that $t$ matches the condition on attribute $A$, and $Pro(t)$ represents the probability that $t$ satisfies the query conditions, then we have
\begin{center}
	$Pro(t)=\prod_{A} Pro_{t}(A)$.
\end{center}
For each local bucket within salary, we calculate the sum of probabilities, then round down and round up the sum as the lower bound and upper bound of the number of tuples, respectively. Next, we count the lower bound and upper bound of the sum of salary in the local bucket through help table, denoted as $Sum_{lower}(b)$ and $Sum_{upper}(b)$, respectively. The lower bound and upper bound of the sum of salary in the whole table are expressed as
\begin{center}
	$Sum_{lower}=\sum_{B} Sum_{lower}(b)$,
\end{center}
and
\begin{center}
	$Sum_{upper}=\sum_{B} Sum_{upper}(b)$.
\end{center}

Figure \ref{fig_query_mon_2} and \ref{fig_query_td_2} show the results of aggregate query answering of LGB\_MDP and LGB\_NCP, respectively. The relative error rate increases with the growth of the value of $k$ or $l$, and it is more affected by $k$ than that of $l$. Consequently, local generalization has more influence on information utility than local bucketization. Note that, LGB\_MDP has lower relative error rates than LGB\_NCP when $l$ is small, because LGB\_MDP divides the tuples into local equivalence groups more evenly than LGB\_NCP that narrows the ranges of generalized values. But with the growth of $l$, the relative error rates of LGB\_MDP is gradually close to that of LGB\_NCP, because the ranges of local buckets are large enough to cover the query conditions, then local bucketization has more and more important influence on information utility. In conclusion, LGB\_MDP is better than LGB\_NCP if the value of $l$ is small, otherwise, LGB\_MDP and LGB\_NCP perform very equal.

\subsection{Effect of Density}
\label{sec_exp_density}
In this experiment, we check the effect of the density of the sensitive values in semi-sensitive attributes on LGB\_MDP. The percentages of sensitive values are set to about 10\%, 20\%, 30\% and 40\% in each semi-sensitive attribute, respectively, and the rest configurations are the same as the previous experiments. The results are evaluated through aggregate query answering, and Figure \ref{fig_query_mon_1}, \ref{fig_query_mon_}, \ref{fig_query_mon_3}, and \ref{fig_query_mon_4} present the results of the percentages of 10\%, 20\%, 30\% and 40\%, respectively.

With the density of sensitive values increases, the relative error rate is declined. The most obvious reduction is in Figure \ref{fig_query_mon_4}, where the percentage of sensitive values is about 40\%, and the values of $k$ and $l$ are 10 and 20, respectively. It is mainly because more values are divided into local buckets rather than that of equivalence groups, and the values are indistinguishable in equivalence groups but are specific in local buckets, so that less QI values narrow the ranges of generalized values, and more sensitive values do not loss the accuracies in local buckets and even compose more local buckets in which the ranges are further decreased.

\section{Related Work}
\label{sec_rel}
Privacy-preserving data publishing determines an optimal trade-off between privacy protection and information preservation. Above all, the most important factor in applying appropriate anonymization technique is assuming the publishing scenario. Most research focus on anonymizing a single and static release so far. However, there are more complicated data publishing environments in actual applications, such as multiple release publishing \cite{seq}, continuous data publishing \cite{contin1, contin2}, and collaborative data publishing \cite{colla}. Moreover, personalized privacy preserving \cite{person} is also an important publishing scenario because publishers always ignore the concrete needs of individuals which may causes serious privacy disclosures.

Secondly, an anonymity principle should be chosen or proposed to provide secure protections against privacy disclosures according to the assumed background knowledge and purpose of adversary. Generally, the privacy threats can be divided into four categories, which are membership disclosure, identity disclosure, attribute disclosure, and probabilistic disclosure. Many useful anonymity principles can be used to prevent these privacy attacks. $k$-Anonymity \cite{Kanony} is one of the most powerful and widespread principles for protecting sensitive information, especially for defending identity attack. $l$-Diversity \cite{Ldiver} and $t$-closeness \cite{tc} are popular and effective principles for preventing attribute disclosure. $\delta$-Presence \cite{presen} and $(d, \gamma)$-privacy \cite{probabi} hinder membership disclosure and probabilistic disclosure, respectively. Note that, according to Dwork \cite{diff}, the absolute privacy protection is impossible. Therefore, a robust anonymity framework is supposed to prevent most leakages of confidential information through combining various principles. Additionally, differential privacy \cite{diff_sur}, which is an effective model for protecting statistic data, does not need to assume the background knowledge of adversary. It prevents different disclosures of queries through adding noise based on different mechanisms \cite{mech1, mech2, mech3, mech4, mech5}.

Finally, an explicit algorithm should be presented that complies with anonymity principle and minimizes information loss as far as possible. For example, slicing \cite{slic} complying with $l$-diversity principle protects sensitive attributes, prevents attribute disclose and membership disclosure to some extent, and preserves better data utility than generalization. Microaggregation \cite{microag} is a flexible technique that satisfies $k$-anonymous $t$-closeness through merging clusters, and it can also adjust the priority of $k$-anonymity and $t$-closeness according to actual demands, such that microaggregation can choose to preserve more information utility by $k$-anonymity-first or provide more information security by $t$-closeness-first. Cross-bucket generalization \cite{cross} complies with ($k$, $l$)-anonymity principle to prevent identity disclosure and attribute disclosure. It not only provides greater security for sensitive attribute but also preserves more information utility than generalization when the demand of attribute protection is higher than that of identity protection.

\section{Conclusion and Future Study}
\label{sec_conclu}
This paper supposes that people can optionally set their sensitive values according to the personal requirements, and proposes a novel technique, namely, local generalization and bucketization, to provide secure protections for identities and sensitive values. The rationale is to divide the tuples into local equivalence groups and partition the sensitive values into local buckets through local generalization and local bucketization, respectively. Furthermore, the protections of local generalization and local bucketization are independent, thus their algorithms can be flexibly achieved according to practical requirements without weakening the other protection, respectively.

LGB is a flexible framework that can be used in many publishing scenarios for protecting confidential information. In the future study, we will combine with the approach of online learning that automatically protects the incremental publishing data. Additionally, the individual relationships, which is also considered as sensitive, are recommended to be studied in accordance with LGB.

\bibliographystyle{IEEEtran}
\bibliography{IEEEabrv,./references}

\begin{thebibliography}{10}
\providecommand{\url}[1]{#1}
\csname url@samestyle\endcsname
\providecommand{\newblock}{\relax}
\providecommand{\bibinfo}[2]{#2}
\providecommand{\BIBentrySTDinterwordspacing}{\spaceskip=0pt\relax}
\providecommand{\BIBentryALTinterwordstretchfactor}{4}
\providecommand{\BIBentryALTinterwordspacing}{\spaceskip=\fontdimen2\font plus
\BIBentryALTinterwordstretchfactor\fontdimen3\font minus
  \fontdimen4\font\relax}
\providecommand{\BIBforeignlanguage}[2]{{%
\expandafter\ifx\csname l@#1\endcsname\relax
\typeout{** WARNING: IEEEtran.bst: No hyphenation pattern has been}%
\typeout{** loaded for the language `#1'. Using the pattern for}%
\typeout{** the default language instead.}%
\else
\language=\csname l@#1\endcsname
\fi
#2}}
\providecommand{\BIBdecl}{\relax}
\BIBdecl

\bibitem{Gen}
L.~{Sweeney}, ``Achieving k-anonymity privacy protection using generalization
  and suppression,'' \emph{International Journal of Uncertainty, Fuzziness and
  Knowledge-Based Systems}, vol.~10, no.~5, pp. 571--588, 2002.

\bibitem{Ana}
X.~{Xiao} and Y.~{Tao}, ``Anatomy: simple and effective privacy preservation,''
  in \emph{Proceedings of the 32nd international conference on Very large data
  bases}, 2006, pp. 139--150.

\bibitem{Sur}
B.~C.~M. {Fung}, K.~{Wang}, R.~{Chen}, and P.~S. {Yu}, ``Privacy-preserving
  data publishing: A survey of recent developments,'' \emph{ACM Computing
  Surveys}, vol.~42, no.~4, 2010.

\bibitem{Curse}
C.~C. {Aggarwal}, ``On k-anonymity and the curse of dimensionality,''
  \emph{very large data bases}, pp. 901--909, 2005.

\bibitem{Trade}
T.~{Li} and N.~{Li}, ``On the tradeoff between privacy and utility in data
  publishing,'' in \emph{Proceedings of the 15th ACM SIGKDD international
  conference on Knowledge discovery and data mining}, 2009, pp. 517--526.

\bibitem{Injecting}
D.~{Kifer} and J.~{Gehrke}, ``Injecting utility into anonymized datasets,'' in
  \emph{Proceedings of the 2006 ACM SIGMOD international conference on
  Management of data}, 2006, pp. 217--228.

\bibitem{person}
X.~{Xiao} and Y.~{Tao}, ``Personalized privacy preservation,'' in
  \emph{Proceedings of the 2006 ACM SIGMOD international conference on
  Management of data}, 2006, pp. 229--240.

\bibitem{Kanony}
L.~{Sweeney}, ``k-anonymity: a model for protecting privacy,''
  \emph{International Journal of Uncertainty, Fuzziness and Knowledge-Based
  Systems}, vol.~10, no.~5, pp. 557--570, 2002.

\bibitem{Ldiver}
A.~{Machanavajjhala}, J.~{Gehrke}, D.~{Kifer}, and M.~{Venkitasubramaniam},
  ``L-diversity: privacy beyond k-anonymity,'' in \emph{22nd International
  Conference on Data Engineering (ICDE'06)}, 2006.

\bibitem{Minvar}
X.~{Xiao} and Y.~{Tao}, ``M-invariance: towards privacy preserving
  re-publication of dynamic datasets,'' in \emph{Proceedings of the 2007 ACM
  SIGMOD international conference on Management of data}, 2007, pp. 689--700.

\bibitem{Mon}
K.~{LeFevre}, D.~{DeWitt}, and R.~{Ramakrishnan}, ``Mondrian multidimensional
  k-anonymity,'' in \emph{22nd International Conference on Data Engineering
  (ICDE'06)}, 2006.

\bibitem{Ncp}
J.~{Xu}, W.~{Wang}, J.~{Pei}, X.~{Wang}, B.~{Shi}, and A.~W.-C. {Fu},
  ``Utility-based anonymization using local recoding,'' in \emph{Proceedings of
  the 12th ACM SIGKDD international conference on Knowledge discovery and data
  mining}, 2006, pp. 785--790.

\bibitem{tc}
N.~{Li}, T.~{Li}, and S.~{Venkatasubramanian}, ``t-closeness: Privacy beyond
  k-anonymity and l-diversity,'' in \emph{2007 IEEE 23rd International
  Conference on Data Engineering}, 2007, pp. 106--115.

\bibitem{penal}
R.~{Bayardo} and R.~{Agrawal}, ``Data privacy through optimal
  k-anonymization,'' in \emph{21st International Conference on Data Engineering
  (ICDE'05)}, 2005, pp. 217--228.

\bibitem{per}
Q.~{Zhang}, N.~{Koudas}, D.~{Srivastava}, and T.~{Yu}, ``Aggregate query
  answering on anonymized tables,'' in \emph{2007 IEEE 23rd International
  Conference on Data Engineering}, 2007, pp. 116--125.

\bibitem{seq}
K.~{Wang} and B.~C.~M. {Fung}, ``Anonymizing sequential releases,'' in
  \emph{Proceedings of the 12th ACM SIGKDD international conference on
  Knowledge discovery and data mining}, 2006, pp. 414--423.

\bibitem{contin1}
J.-W. {Byun}, Y.~{Sohn}, E.~{Bertino}, and N.~{Li}, ``Secure anonymization for
  incremental datasets,'' in \emph{SDM 2006}, 2006, pp. 48--63.

\bibitem{contin2}
J.~{Le}, D.~{Zhang}, N.~{Mu}, X.~{Liao}, and F.~{Yang}, ``Anonymous privacy
  preservation based on m-signature and fuzzy processing for real-time data
  release,'' \emph{IEEE Transactions on Systems, Man, and Cybernetics}, pp.
  1--13, 2018.

\bibitem{colla}
K.~{Wang}, B.~C.~M. {Fung}, and G.~{Dong}, ``Integrating private databases for
  data analysis,'' \emph{intelligence and security informatics}, pp. 171--182,
  2005.

\bibitem{presen}
M.~E. {Nergiz}, M.~{Atzori}, and C.~{Clifton}, ``Hiding the presence of
  individuals from shared databases,'' in \emph{Proceedings of the 2007 ACM
  SIGMOD international conference on Management of data}, 2007, pp. 665--676.

\bibitem{probabi}
V.~{Rastogi}, D.~{Suciu}, and S.~{Hong}, ``The boundary between privacy and
  utility in data publishing,'' \emph{very large data bases}, pp. 531--542,
  2007.

\bibitem{diff}
C.~{Dwork}, ``Differential privacy,'' in \emph{ICALP'06 Proceedings of the 33rd
  international conference on Automata, Languages and Programming - Volume Part
  II}, 2006, pp. 1--12.

\bibitem{diff_sur}
C.~{Dwork} and A.~{Roth}, \emph{The Algorithmic Foundations of Differential
  Privacy}, 2014.

\bibitem{mech1}
C.~{Dwork}, F.~{Mcsherry}, K.~{Nissim}, and A.~{Smith}, ``Calibrating noise to
  sensitivity in private data analysis,'' \emph{Lecture Notes in Computer
  Science}, pp. 265--284, 2006.

\bibitem{mech2}
F.~{McSherry}, ``Privacy integrated queries: an extensible platform for
  privacy-preserving data analysis,'' \emph{Communications of The ACM},
  vol.~53, no.~9, pp. 89--97, 2010.

\bibitem{mech3}
A.~{Roth} and T.~{Roughgarden}, ``Interactive privacy via the median
  mechanism,'' in \emph{Proceedings of the forty-second ACM symposium on Theory
  of computing}, 2010, pp. 765--774.

\bibitem{mech4}
Q.~{Geng} and P.~{Viswanath}, ``The optimal noise-adding mechanism in
  differential privacy,'' \emph{IEEE Transactions on Information Theory},
  vol.~62, no.~2, pp. 925--951, 2016.

\bibitem{mech5}
F.~{Liu}, ``Generalized gaussian mechanism for differential privacy,''
  \emph{IEEE Transactions on Knowledge and Data Engineering}, vol.~31, no.~4,
  pp. 747--756, 2019.

\bibitem{slic}
T.~{Li}, N.~{Li}, J.~{Zhang}, and I.~{Molloy}, ``Slicing: A new approach for
  privacy preserving data publishing,'' \emph{IEEE Transactions on Knowledge
  and Data Engineering}, vol.~24, no.~3, pp. 561--574, 2012.

\bibitem{microag}
J.~{Soria-Comas}, J.~{Domingo-Ferrer}, D.~{Sanchez}, and S.~{Martinez},
  ``t-closeness through microaggregation: Strict privacy with enhanced utility
  preservation,'' in \emph{2016 IEEE 32nd International Conference on Data
  Engineering (ICDE)}, 2016, pp. 1464--1465.

\bibitem{cross}
B.~{Li}, Y.~{Liu}, X.~{Han}, and J.~{Zhang}, ``Cross-bucket generalization for
  information and privacy preservation,'' \emph{IEEE Transactions on Knowledge
  and Data Engineering}, vol.~30, no.~3, pp. 449--459, 2018.

\end{thebibliography}

\begin{IEEEbiographynophoto}{Boyu Li}
	received his MS and PhD degrees in Computer Science and Technology from Jilin University, in 2014 and 2018, respectively. He is currently a post-doctor in College of Computer Science and Technology from Huazhong University of Science and Technology. His research interests include privacy-preserving data publishing.
\end{IEEEbiographynophoto}

\begin{IEEEbiographynophoto}{Kun He}
	is currently a Professor in School of Computer Science and Technology, Huazhong University of Science and Technology (HUST), Wuhan, P.R. China; and a Mary Shepard B. Upson Visiting Professor for the 2016-2017 Academic year in Engineering, Cornell University, NY, USA. She received the B.S degree in physics from Wuhan University, Wuhan, China, in 1993; the M.S. degree in computer science from Huazhong Normal University, Wuhan, China, in 2002; and the Ph.D. degree in system engineering from HUST, Wuhan, China, in 2006. Her research interests include: machine learning, deep learning, social networks, algorithm design and analysis.
\end{IEEEbiographynophoto}

\begin{IEEEbiographynophoto}{Geng Sun}
	received a BS degree in Communication Engineering from Dalian Polytechnic University, and a PhD degree in Computer Science and Technology from Jilin University, in 2007 and 2018, respectively. He was a visiting researcher in the School of Electrical and Computer Engineering at Georgia Institute of Technology, USA. He is currently a post-doctor at Jilin University. His research interests include wireless sensor networks, antenna array, collaborative beamforming and optimizations.
\end{IEEEbiographynophoto}

\end{document}